\documentclass[a4paper,UKenglish,cleveref, autoref, thm-restate]{lipics-v2021}

\pdfoutput=1
\hideLIPIcs
\nolinenumbers

\usepackage{algorithm}
\usepackage{amsmath}
\usepackage{mathtools}
\usepackage{ stmaryrd }
\usepackage[noend]{algpseudocode}
\usepackage{varwidth}
\usepackage{tikz}
\usepackage{subcaption}
\usetikzlibrary{automata, graphs,positioning,chains,arrows,snakes,decorations.pathmorphing}

\newcommand{\sem}[1]{{\llbracket#1\rrbracket}}
\newcommand{\cA}{\mathcal{A}}
\newcommand{\cM}{\mathcal{M}}
\newcommand{\bbT}{\mathbb{T}}
\newcommand{\bbD}{\mathbb{D}}
\newcommand{\bbN}{\mathbb{N}}
\newcommand{\frakS}{\mathfrak{S}}

\newcommand{\str}{{\sf str}}

\title{Dynamic direct (ranked) access of MSO query evaluation over SLP-compressed strings} %

\titlerunning{Dynamic direct (ranked) access of MSO query eval. over SLP-compressed strings} %

\author{Martín Muñoz}
{Univ. Artois, CNRS, UMR 8188, Centre de Recherche en Informatique de Lens (CRIL), F-62300, France}
{munoz@cril.fr}
{https://orcid.org/0009-0003-3294-6159}
{}

\authorrunning{Martín Muñoz} %

\Copyright{Martín Muñoz} %

\ccsdesc[500]{Theory of computation~Logic and databases}
\ccsdesc[300]{Theory of computation~Data structures and algorithms for data management}
\ccsdesc[300]{Theory of computation~Automata extensions}

\keywords{Direct Access, MSO, incremental evaluation, text compression, algorithms on compressed text} %

\begin{document}

\maketitle

\begin{abstract}
	We present an algorithm that, given an index $t$, produces the $t$-th (lexicographically ordered) answer of an MSO query over a string.
	The algorithm requires linear-time preprocessing, and builds a data structure that answers each of these calls in logarithmic time. 
	We then show how to extend this algorithm for a string that is compressed by a straight-line program (SLP), also with linear-time preprocessing in the (compressed encoding of the) string, and maintaining direct access in logtime of the original string.
	Lastly, we extend the algorithm by allowing complex edits on the SLP after the direct-access data structure has been processsed, which are translated into the data structure in logtime. We do this by adapting a document editing framework introduced by Schmid and Schweikardt (PODS 2022).
	This work improves on a recent result of dynamic direct access of MSO queries over strings (Bourhis et. al., ICDT 2025) by a log-factor on the access procedure, and by extending the results to SLPs.
\end{abstract}

\section{Introduction}\label{sec:intro}

Evaluating queries built from monadic-second order logic (MSO) formulas is a well-studied task in database theory and algorithmics for formal languages. 
A reason for this is that it serves as a pivot to other formalisms.
In particular, owing to the Büchi-Elgot-Trakhtenbrot theorem~\cite{buchi}, it is known that MSO over words is equivalent to finite state automata and regular expressions. This equivalence also extends to trees, as MSO queries over trees are equivalent to tree regular languages and visibly pushdown automata. 
The reach of the task extends further, since the regex formulation means MSO evaluation has use in string processing; the tree formulation means it has use in XML processing; and very relevantly, MSO can be used to handle evaluation for regular spanners~\cite{spannersenum1, rankedenum, nestedenum}.

The base evaluation task is: given a MSO formula $\varphi$, a data string $w$, does $w\models \varphi$ hold? (understanding $w$ as an ordered monadic structure in the typical sense). 
A natural extension of this task is to instead query via a formula with open variables $\varphi(x_1,\ldots,x_k)$, and to compute the assignments $\mu$ such that $w\models\varphi(\mu(x_1),\ldots,\mu(x_k))$.

Naturally, by this extension, the set of answers can become much larger than the input instance. 
To overcome this, a great deal of works have studied different ways of extracting the answers without producing them all at once. 
In MSO evaluation, we find works that explore the task of enumerating these answers one by one~\cite{bagan, courcelle, antoineicalp}; we also find {\em ranked enumeration}, where the answers have to be produced in order~\cite{rankedenum}; and {\em direct access}, where the set of answers is accessed on an arbitrary index~\cite{phdbagan, phdkazana}.

We identify two different tasks that share a name in the literature. First is {\em random access}: for the set of answers $S$, the goal is to construct an index between $1$ and $|S|$ such that each produces a different answer from the set. Second is {\em ranked access}: the task is the same but there is an order on the answers, and the index follows this order. The naming in the literature shows discrepancies and both tasks have also been named {\em direct access}~\cite{nofar, count1, count2, billie}. 

We highlight the fact that, as noted in~\cite{nofar}, direct random access can simulate uniform sampling and enumeration; whereas direct ranked access encompasses quantile queries (e.g., the median of the answers), ranked enumeration and top-k answering.

On top of the evaluation task, some works venture into the realm of {\em incremental evaluation}. In other words, we allow changes in the data after it has been processed for the query. 
For MSO queries, this has been abundantly explored in enumeration~\cite{treerelabels, niewerthtrees, compressedtreesenum, slpedits, slpenum2}. However, as far as we know, only one work~\cite{diraccmso} has studied direct access for MSO on an editable string.

The present work studies an algorithm for efficient direct ranked access on MSO queries. Further, we also consider a more complex setting: the string may have been compressed as a grammar with a singleton language~\cite{rubin}. Encodings of this form are typically called straight-line programs (SLP)~\cite{slp}; they enjoy good compression bounds and have enticing algorithmic properties~\cite{billie, navarrosurvey1, navarrosurvey2, urbina, slpsurvey}.
Evaluation for MSO queries on SLP-compressed strings (and trees) has been already considered for enumeration~\cite{slpenum1, compressedtreesenum, slpedits, slpenum2}, yet we believe ours is the first work that approaches it from the angle of direct access.

\subparagraph{Background}
This work is an improvement of a result in~\cite{diraccmso}: an algorithm for dynamic direct access for MSO queries on strings. Their result achieves logsquare-access with log-updates on the size of the input string. 
In this work, we reduce the logsquare-access to a single log factor, maintaining the log-updates.
Further, we include an extension to SLP-compressed strings which ends up appearing naturally due to the construction of the data structure.
We borrow various concepts and insights from this work.

\subparagraph{Paper structure}
On Section~\ref{sec:prelim}, we introduce some known concepts and easy extensions.
On Section~\ref{sec:words}, we present one of the main results of the paper, 
which is the direct access algorithm for MSO queries on strings.
On Section~\ref{sec:slps}, we extend this result to SLPs, 
and on Section~\ref{sec:upd}, we explain how to handle edits.
We give some closing words on Section~\ref{sec:concl}.

Due to space constraints, we defer (at least one) proof to the appendix. 
Statements are given without proof when we believe they follow straightforwardly.

\section{Preliminaries}\label{sec:prelim}

Let $\bbN = \{0,1,\ldots\}$. Given sets $A, B$ we use the notation $B^{A}$ for the set of functions $f:A\to B$. In particular, for a set $A$, we write $\bbN^{A\times A}$ to denote the matrices of natural numbers indexed by pairs $(a_1, a_2)\in A\times A$. We also use the notation $I_A$ for the $A\times A$ identity matrix; i.e., $I_A(a_1, a_2) = 1$ if $a_1 = a_2$ and $I_A(a_1, a_2) = 0$ if $a_1 \neq a_2$.

We use $[1,n] = \{1,\ldots,n\}$. We use the symbol $<$ to denote the standard order over $\bbN$. If a set $A$ is totally ordered by some order $\preccurlyeq$, we will use the word {\em slice} for a subset $B\subseteq A$ that can be described by two elements $a_{l},a_r\in A$ such that $B = \{a\mid a_l \preccurlyeq a \preccurlyeq a_r\}$. 

We write $\omega < 3$ for an exponent for binary matrix multiplication~\cite{Strassen1969}.

\subparagraph{Strings and mappings}
Given a finite alphabet $\Sigma$, a {\em string over $\Sigma$} (or just a {\em string}) is a sequence $w = a_1\ldots a_n\in\Sigma^*$, where $\Sigma^*$ denotes the set of finite sequences of elements in $\Sigma$. For such a $w$, we define its {\em length} as $n$, which we write by $|w|$. The string of length zero is denoted by $\varepsilon$, and we write $\Sigma^+$ for $\Sigma^* \setminus \{\varepsilon\}$.
For strings $w_1$ and $w_2$ we denote their concatenation by $w_1\cdot w_2$.
For a string $w = a_1\ldots a_n$,  for any $1 \leq i \leq j \leq n$ we use the notation $w[i,j] = a_i \ldots a_j$, and $w[i] = w[i,i] = a_i$.

Fix an $n$. Given a finite set of variables $X = \{x_1,\ldots,x_k\}$, a {\em mapping} of $X$ onto $w$ is a function $\mu:X\to[1,n]$. 
Alternatively, we will define mappings as sets of the form $\{x_1\mapsto s_1,\ldots,x_k\mapsto s_k\}$, for $s_1\ldots,s_k\in[1,n]$. A {\em partial mapping} is any subset of a mapping. 
Partial mappings can be also understood as partial functions with domain $X$, of which the total ones are simply mappings.
Fix an order $\prec$ over $X$ such that $x_1 \prec \cdots \prec x_k$. 
We extend this order onto mappings as follows:
Let $\mu_1,\mu_2:X\to[1,n]$ be two mappings, then $\mu_1 \prec \mu_2$ iff $\mu_1(x_1) = \mu_2(x_1), \ldots, \mu_1(x_{i-1}) = \mu_2(x_{i-1})$
and $\mu_1(x_i) \prec \mu_2(x_i)$ for some $i \in [1,k]$.

Given an $x\in X$ and $\alpha\subseteq[1,n]$, we call $x\Mapsto \alpha$ a {\em mapping rule}. 
We say that a (partial) mapping $\mu:X\to[1,n]$ {\em respects} a set  $\tau$ of mapping rules if for every rule $x\Mapsto \alpha \in \tau$ it holds that $\mu(x)\in \alpha$. When $\alpha$ is a singleton set, we may simplify the notation and instead of $x\Mapsto \{s\}$, write it as $x\mapsto s$.
For a fixed $X$,
a mapping rule is {\em functional} if it mentions every $x\in X$.
Two sets $\tau_1, \tau_2$ of mapping rules are {\em equivalent} if every mapping respects $\tau_1$ iff it respects $\tau_2$.
Observe that for every set of mapping rules, there is an equivalent functional~one.

\subparagraph{Vset Automata}
Given a fixed set of variables $X$, a vset automata is a tuple $\cA = (\Sigma, Q, \Delta, I, F)$ where $\Sigma$ is a finite alphabet, $Q$ is a set of states $\Delta \subseteq Q\times (\Sigma \times 2^{X})\times Q$ is the set of transitions, and $I, F\subseteq Q$ are the sets of initial and final states, respectively. 
A {\em run $\rho$ of $\cA$ over} some string $w = a_1\ldots a_n$ is a sequence 
\[
\rho = q_0\xrightarrow{a_1\, ,\,S_1} q_1\xrightarrow{a_2\, ,\, S_2}\cdots \xrightarrow{a_n\, ,\, S_n} q_n 
\]
where $q_0\in I$ and $(q_{i-1},(a_i, S_i),q_i)\in \Delta$ for each $i\in[1,n]$. We say that $\rho$ is {\em valid} if $S_1\cup\cdots \cup S_n = X$, and for each $i, j\in[1,n]$ it holds that $S_i\cap S_j = \emptyset$. In other words, $S_1,\ldots,S_n$ forms a partition of $X$.
For a valid $\rho$, we extract a mapping $\mu_{\rho}$ given by $\mu_{\rho}(x) = i$ where $S_i$ is the (only) set that contains $x$.
We also say that $\rho$ is {\em accepting} if $q_n\in F$. The semantics of vset automata is given by a function
\[
\sem{\cA}(w) := \{\mu_{\rho}\mid \text{$\rho$ is a valid and accepting run of $\cA$ over $w$} \}.
\]

Given $w = a_1\ldots a_n$ and $l,r\in[1,n]$ where $l \leq r$ we say that $\rho'$ is an {\em $(l,r)$-partial run over $w$} if there is a run $\rho$ over $w$ like above, and $\rho' = q_{l-1}\xrightarrow{a_{l}\, ,\,S_{l}} q_{l}\xrightarrow{a_{l+1}\, ,\, S_{l+1}}\cdots \xrightarrow{a_r\, ,\, S_r} q_r $. From an $(l,r)$-partial run $\rho'$, we extract the partial mapping $\mu_{\rho'} := \{x\mapsto i\mid \rho'\text{ has an $S_i\ni x$}\}$.

As in~\cite{diraccmso}, we deal with two restrictions on vset automata: we say that $\cA$ is {\em functional} if every accepting run of $\cA$ is necessarily valid; and we say that $\cA$ is {\em unambiguous} if it holds that for any two runs $\rho_1$ and $\rho_2$ that satisfy $\mu_{\rho_1} = \mu_{\rho_2}$, then $\rho_1 = \rho_2$. Note that for every vset automata there is an equivalent functional unambiguous one~\cite{spans, incompl}. In what follows, we focus only on vset automata that satisfy both restrictions.

We define $\sem{\cA}(w)\langle\tau\rangle := \{\mu\in \sem{\cA}(w)\mid\text{$\mu$ respects $\tau$} \}$ and denote its size by $\sharp\sem{\cA}(w)\langle\tau\rangle$.
As a general rule, the mapping restrictions $\tau$ that we use result in $\sem{\cA}(w)\langle\tau\rangle$ being a slice of $\sem{\cA}(w)$. Namely, the elements in $\sem{\cA}(w)\langle\tau\rangle$ are contiguous inside $\sem{\cA}(w)$ w.r.t. the  order~$\prec$.

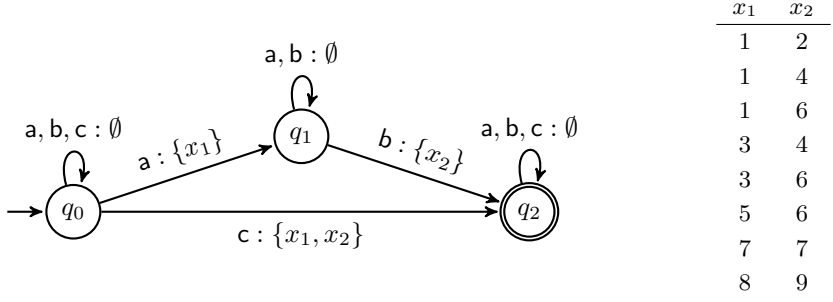
\begin{figure}[t]
	\centering
	
    \begin{subfigure}[r]{0.5\textwidth}            
\begin{tikzpicture}[scale=1.0,->,>=stealth',shorten >=1pt,auto,node distance=2cm,thick,state/.style={circle,draw}, color=black, initial text= {},
		initial distance={5mm}]
		\node[state,initial] (q0) at (0,0) {$q_0$};
		\node[state] (q1) at (3,1) {$q_1$};
		\node[state, accepting] (q2) at (6,0) {$q_2$};
		\draw (q0) to[loop above] node {${\sf a}, {\sf b}, {\sf c} : \emptyset$} (q0);
		\draw (q1) to[loop above] node {${\sf a},{\sf b} : \emptyset$} (q1);
		\draw (q2) to[loop above] node {${\sf a}, {\sf b}, {\sf c}  : \emptyset$} (q2);
		\draw (q0) to node[above, rotate=17] {${\sf a} : \{x_1\}$} (q1);
		\draw (q1) to node[above, rotate=-17] {${\sf b} : \{x_2\}$} (q2);
		\draw (q0) to node[below] {${\sf c} : \{x_1, x_2\}$} (q2);
\end{tikzpicture}	
    \end{subfigure}%
    \hspace{7em}
        \begin{subfigure}{0.2\textwidth}      
\begin{tabular}{ cc}
$x_1$ & $x_2$ \\
\hline
$1$ & $2$  \\ 
$1$ & $4$  \\
$1$ & $6$  \\
$3$ & $4$  \\
$3$ & $6$  \\
$5$ & $6$  \\
$7$ & $7$  \\
$8$ & $9$  
\end{tabular}
    \end{subfigure}%
	\caption{(left) The running example $\cA^{\sf ex}$ illustrated by the automaton, and (right) the list of mappings in $\sem{\cA^{\sf ex}}(w_0)$ for string $w_0= {\sf abababcab}$. }
	\label{fig:re}
\end{figure}

\subparagraph{MSO and vset automata}

In the same line as~\cite{diraccmso}, we present our results as query evaluation for Monadic Second Order on strings: for a MSO formula of the form $\varphi(x_1,\ldots,x_k)$ over strings (with open first-order variables $x_1,\ldots,x_k$), and a string $w\in\Sigma^*$ that is interpreted as a structure, we define $\sem{\varphi}(w) := \{\mu:\{x_1,\ldots,x_k\}\to[1,|w|]\mid \varphi(\mu(x_1),\ldots,\mu(x_k))\}$ as the answer set for the evaluation on $\varphi$ and $w$. 
The results in this work are all stated in terms of vset automata, as for any MSO formula $\varphi(x_1,\ldots,x_k)$ over strings, there exists a vset automaton $\cA$ for $X = \{x_1,\ldots,x_k\}$ that satisfies $\sem{\varphi(x_1,\ldots,x_k)} = \sem{\cA}$.

\subparagraph{Straight-line programs.}
A \emph{context-free grammar} is a tuple $G = (N,\Sigma,R,S_0)$, where $N$ is a non-empty set of non-terminals, $\Sigma$ is finite alphabet, $S_0\in N$ is the start symbol and $R\subseteq N \times(N \cup\Sigma)^{+}$ is the set of rules.
As a convention, the rule $(A,\beta) \in R$ will be written as $A\to \beta$, and we will call $\Sigma$ and $N$ the set of terminal and non-terminal symbols, respectively. 
A context-free grammar $\frakS = (N,\Sigma,R,S_0)$ is a \emph{straight-line program} (SLP) if $R$ is a total function from $N$ to $(N \cup\Sigma)^{+}$ and the directed graph $(N, \{(A,B)\mid (A,\beta) \in R \text{ and } B\text{ appears in } \beta\})$ is acyclic. 
For every $A \in N$, let $R(A)$ be the unique $\beta \in (N \cup\Sigma)^{+}$ such that $(A,\beta) \in R$, and for every $a \in \Sigma$ let $R(a) = a$.
We extend $R$ to a morphism $R^*: (N \cup\Sigma)^*\to \Sigma^*$ recursively such that $R^*(u) = u$ when $u$ is a string over $\Sigma$, and $R^*(\alpha_1 \ldots \alpha_n) = R^*\big(R(\alpha_1) \cdot \ldots \cdot R(\alpha_n)\big)$, where $\alpha_i \in (N \cup \Sigma)$ for every $i \leq n$. 
By our definition of SLP, $R^*(A)$ is in $\Sigma^+$, and uniquely defined for each $A\in N$. 
Then we define the string encoded by $S$ as $\str(S) = R^*(S_0)$.

We say that a context-free grammar $G = (N,\Sigma,R,S_0)$ is in {\em Chomsky normal form} if rules in $R$ are either of the form $A\to BC$ and $A\to a$, for $A, B, C\in N$ or $a\in \Sigma$.

We define the size of an SLP $\frakS = (N,\Sigma,R,S_0)$ as $|\frakS|= \sum_{A \in N} |R(A)|$. 

{\it Example. }
	Let $\frakS = (N, \Sigma, R, S_0)$ be an SLP with $N = \{S_0, A, B, C, D, S_{\sf a}, S_{\sf b}, S_{\sf c}\}$, $\Sigma = \{{\sf a}, {\sf b}, {\sf c}\}$, and $R = \{S_0 \to AB\, ,\, A\to DD\, ,\, B\to CD\, ,\, C\to D S_{\sf c}\, ,\, D\to S_{\sf a}S_{\sf b}\, ,\, S_{\sf a}\to{\sf a}\, , \, S_{\sf b}\to{\sf b}\, ,\, S_{\sf c}\to{\sf c}\}$. We then have that $\str(D) = {\sf ab}$, $\str(C) = {\sf abc}$, $\str(B) = {\sf abcab}$, $\str(A) = {\sf abab}$, and $\str(\frakS) = \str(S_0) = {\sf abababcab}$, namely, the string represented by $\frakS$.
We illustrate the DAG of $R$ in Figure~\ref{fig:slp1} (in the context of the main algorithm for SLPs).

\subparagraph{Machine model}
When dealing with an SLP $\frakS$, it could be that the string $w$ that it represents has length $2^{|\frakS|}$.
In particular, it could happen that merely representing an index in $w$ requires a linear number of bits with respect to $|\frakS|$.

We handle this by defining a function $\kappa$ such that $\kappa(T)$ describes the cost of an arithmetical operation between numbers of value, and result, at most $T$. 
We will deal with numbers that represent indices in $\sem{\cA}(w)$, so the largest value we may ever deal with is $T \leq |w|^{|X|}$. 
We simply assume that every operation we perform has cost $\kappa(|w|^{|X|})$, and we omit this factor from the presented bounds, noting here that it should be regarded as implicit.

Note that, for a string $w$, $\kappa(|w|)$ can be assumed to be constant, and this matches the typical assumptions 
of operating in constant-cost registers of size $\log n$. However, this implies that the bounds contain an extra $|X|$ factor, which, in fact, was properly accounted for in~\cite{diraccmso}. 
We choose to keep the factor implicit nonetheless.

\subsection{Running example}
We introduce an example that we will use now to illustrate the semantics of the model, and later on for explaining the direct access algorithm. We define an unambiguous functional vset automata $\cA^{\sf re}$ as the one depicted in Figure~\ref{fig:re} (left), and we evaluate it over the string $w_0 = {\sf abababcab}$.
We see that the accepting runs of $\cA^{\sf re}$ over $w_0$ that render the first three elements in $\sem{\cA^{\sf re}}(w_0)$ are:
\begin{align*}
\rho_1 &\ =\  q_0\xrightarrow{{\sf a}\, , \, \{x_1\}} 
		q_1\xrightarrow{{\sf b}\, , \, \{x_2\}} 
		q_2\xrightarrow{{\sf a}\, , \, \emptyset} 
		q_2\xrightarrow{\ \ {\sf b}\, , \, \emptyset\ \ } 
		q_2\xrightarrow{{\sf a}\, , \, \emptyset} 
		q_2\xrightarrow{\ \ {\sf b}\, , \, \emptyset\ \ } 
		q_2\xrightarrow{{\sf c}\, , \, \emptyset} 
		q_2\xrightarrow{{\sf a}\, , \, \emptyset} 
		q_2\xrightarrow{{\sf b}\, , \, \emptyset} 
		q_2\, ,\\
\rho_2 &\ =\  q_0\xrightarrow{{\sf a}\, , \, \{x_1\}} 
		q_1\xrightarrow{\ \ {\sf b}\, , \, \emptyset\ \ } 
		q_1\xrightarrow{{\sf a}\, , \, \emptyset} 
		q_1\xrightarrow{{\sf b}\, , \, \{x_2\}} 
		q_2\xrightarrow{{\sf a}\, , \, \emptyset} 
		q_2\xrightarrow{\ \ {\sf b}\, , \, \emptyset\ \ } 
		q_2\xrightarrow{{\sf c}\, , \, \emptyset} 
		q_2\xrightarrow{{\sf a}\, , \, \emptyset} 
		q_2\xrightarrow{{\sf b}\, , \, \emptyset} 
		q_2\, ,\\
\rho_3 &\ =\  q_0\xrightarrow{{\sf a}\, , \, \{x_1\}} 
		q_1\xrightarrow{\ \ {\sf b}\, , \, \emptyset\ \ } 
		q_1\xrightarrow{{\sf a}\, , \, \emptyset} 
		q_1\xrightarrow{\ \ {\sf b}\, , \, \emptyset\ \ } 
		q_1\xrightarrow{{\sf a}\, , \, \emptyset} 
		q_1\xrightarrow{{\sf b}\, , \, \{x_2\}} 
		q_2\xrightarrow{{\sf c}\, , \, \emptyset} 
		q_2\xrightarrow{{\sf a}\, , \, \emptyset} 
		q_2\xrightarrow{{\sf b}\, , \, \emptyset} 
		q_2\, ,
\end{align*}
and so, $\mu_{\rho_1} = \{x_1\mapsto 1, x_2\mapsto 2\}$, $\mu_{\rho_2} = \{x_1\mapsto 1, x_2\mapsto 4\}$ and  $\mu_{\rho_3} = \{x_1\mapsto 1, x_2\mapsto 6\}$. The full set of mappings $\sem{\cA^{\sf re}}(w_0)$ can be seen in Figure~\ref{fig:re} (right).

\section{Direct access for MSO queries on strings}\label{sec:words}

This section will be dedicated to proving the first main result in the paper.

\begin{theorem}\label{theo:words}
	Fix a set of variables $X$. Let $\cA$ be an unambiguous vset automata and let $w\in\Sigma^*$.
	\begin{enumerate}
	\item For a fixed order $\prec$ over $X$, we can have direct (ranked) access on the set $\sem{\cA}(w)$  in $O(|Q|^{\omega}\cdot|X|^2\cdot\log(|w|))$ time
	after $O(|Q|^{\omega}\cdot|X|\cdot|w|)$ preprocessing.
	\item If $\prec$ is given as input in the access phase, the problem requires $O(|Q|^{\omega}\cdot 2^{|X|}\cdot|w|)$ 
	preprocessing time, and the same direct (ranked) access bounds hold. 
	\end{enumerate}
\end{theorem}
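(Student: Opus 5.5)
The plan is to reduce ranked access to repeated counting. Direct access for $t$ fixes the variables one at a time in the order $x_1\prec\cdots\prec x_k$. Once $\mu(x_1),\ldots,\mu(x_{i-1})$ are fixed as $s_1,\ldots,s_{i-1}$, the answers respecting $\tau_{i-1}=\{x_1\mapsto s_1,\ldots,x_{i-1}\mapsto s_{i-1}\}$ form a slice of $\sem{\cA}(w)$. Inside that slice we must find the smallest $s$ such that
\[
\sharp\sem{\cA}(w)\langle\tau_{i-1}\cup\{x_i\Mapsto[1,s]\}\rangle\geq t',
\]
where $t'$ is the residual index. We then set $s_i=s$ and subtract the count for $[1,s-1]$. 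Since $\cA$ is unambiguous and functional, counting answers is the same as counting runs. So every count is a sum of entries of a product of transition matrices $M_j\in\bbN^{Q\times Q}$, one per position $j$, each restricted to transitions whose variable sets are consistent with the current mapping rules.

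The data structure is a balanced binary tree over the positions $1,\ldots,n$. Each node stores products of matrices over its interval, and entries are split by which variables have been assigned inside the interval. Concretely, for the fixed order, node $v$ stores for each $i\in[0,k]$ a matrix $M_v^{i}$. It counts partial runs on the interval that assign exactly the variables of some set determined by the prefix structure; here it suffices to track, for each $i$, runs that assign no variable among $x_i,\ldots,x_k$ beyond a controlled pattern. The simplest choice is one matrix per variable recording ``runs in which $x_i$ is placed in this interval, and all other placements are free''. This gives $O(|X|)$ matrices per node, each combined by $O(1)$ products of $|Q|\times|Q|$ matrices, for $O(|Q|^\omega|X|)$ per node and $O(|Q|^\omega|X||w|)$ preprocessing overall. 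For part 2 the order is unknown, so each node must store one matrix for every subset $S\subseteq X$ of variables assigned within the interval. Merging children is a subset convolution, $M_v^{S}=\sum_{S_1\uplus S_2=S}M_{v_1}^{S_1}M_{v_2}^{S_2}$, which yields the $2^{|X|}$ factor. In the stated bound the convolution cost is absorbed as stated; if necessary it can be charged against only the disjoint splits actually realizable.

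To answer a query, for each variable $x_i$ we descend from the root to find $s_i$. At a node $v$ with children $v_1,v_2$, the count of answers with $x_i$ placed in $v_1$ is obtained by multiplying a prefix vector (runs up to the left boundary of $v$, consistent with $\tau_{i-1}$) by $M_{v_1}^{(x_i)}$ and then by the appropriate matrix for $v_2$ and a suffix vector. If $t'$ is at most this count we go left; otherwise we subtract and go right. The prefix and suffix are kept as vectors, so each step costs $O(|Q|^2|X|)$, or at most $O(|Q|^\omega|X|)$. Each of the $|X|$ descents has $O(\log|w|)$ steps, for $O(|Q|^\omega|X|^2\log|w|)$ in total.

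The main obstacle is making each step of the descent cost a constant number of matrix or vector operations, rather than recomputing a product along a root-to-node path, which would give the $\log^2$ of~\cite{diraccmso}. I would handle this by maintaining, during the descent, the row vector for everything to the left of the current node and the column vector for everything to the right, both already restricted by $\tau_{i-1}$. These vectors are updated incrementally as we move to a child: going right absorbs the left sibling's matrix into the left vector, and going left absorbs the right sibling's matrix into the right vector. The already-fixed variables $x_1,\ldots,x_{i-1}$ need care, because the sibling matrices must respect $\tau_{i-1}$. For this I would keep node matrices indexed by the set of variables placed inside the interval, and use that the positions $s_j$ are known, so the correct index set for each sibling is determined in $O(|X|)$ time. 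That is where the per-node $|X|$ (or $2^{|X|}$) family of matrices is needed.
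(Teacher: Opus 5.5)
There is a genuine gap in how you enforce the already-fixed variables $x_1,\ldots,x_{i-1}$ during the descent for $x_i$. Indexing node matrices by the \emph{set} of variables placed inside the interval records which variables occur there, but not \emph{where}. A sibling interval containing $s_j$ must contribute only runs with $x_j$ exactly at $s_j$.

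Take the running example with $s_1=3$. The matrix for the interval $[1,4]$ indexed by $\{x_1,x_2\}$ has $(q_0,q_2)$-entry $3$, counting $(x_1,x_2)\in\{(1,2),(1,4),(3,4)\}$, whereas only $(3,4)$ is consistent with $x_1\mapsto 3$. So your counts, and hence the descent, are wrong. No choice of index set fixes this, because a precomputed, position-independent family cannot encode the pins $x_j\mapsto s_j$. Moreover, indexing by arbitrary subsets of the fixed variables would already need exponentially many matrices per node in part~1, contradicting your $O(|X|)$ per node. Your per-node description for part~1 also drifts between incompatible semantics; ``$x_i$ placed here, all other placements free'' fails for the same reason.

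The missing idea is the paper's \textsc{Update} subroutine. Keep the $k+1$ trees $\bbT_i$ whose nodes store products of $\Delta_a^{\{x_{i+1},\ldots,x_k\}}$. After $s_i$ is found, in each tree still to be queried, replace the leaf at $s_i$ by $\Delta_{w[s_i]}^{Z\cup\{x_i\}}$ and recompute its $O(\log|w|)$ ancestors. By Lemma~\ref{lem:upd}, every node matrix then equals $\cM\langle l,r:\tau\rangle$ with the pins built in. The cached prefix/suffix descent, which you do have, can then use node matrices directly; keeping vectors instead of $M^{\sf out}_L,M^{\sf out}_R$ is fine.

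This is also where the $|X|^2\log|w|$ term really comes from: $|X|$ iterations, each updating $O(|X|)$ trees along one root-to-leaf path. It is not an $O(|X|)$ cost per descent step. The modified matrices must be restored after each access.

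Part~2 has a related flaw. Your matrices count runs placing \emph{exactly} the set $S$, so combining children is a subset convolution. That costs $\sum_{S\subseteq X}2^{|S|}=3^{|X|}$ matrix products per node naively, or $O(2^{|X|}|X|^2)$ with fast subset convolution, so the claimed $2^{|X|}$ factor is not ``absorbed''.

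The paper avoids convolution by using ``allowed'' semantics. For each $Y\subseteq X$, the tree $\bbT_Y$ stores plain products of the $\Delta_a^Y$ (transitions whose variable set lies in $Y$), which compose by ordinary multiplication. Exactness is never tracked locally: functionality of $\cA$ guarantees that every counted run from $I$ to $F$ places each variable exactly once. This gives $2^{|X|}$ ordinary trees and the stated $O(|Q|^{\omega}\cdot 2^{|X|}\cdot|w|)$ preprocessing. Access then runs on the trees $\bbT_{\{y_i,\ldots,y_k\}}$ selected by the given order.
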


\subparagraph{Structure of the solution.} 
First, we will describe a generic procedure to retrieve the $t$-th answer of the set $\sem{\cA}(w)$ that assumes we can compute certain data from $\cA$ and $w$.
After this, we define a collection of matrices that describe the logical information that we need so we can retrieve this data.
Then, we will describe a data structure that stores a succinct version of these matrices.
Lastly, we will describe the actual access phase, which uses a caching strategy to both navigate the search space and
compute the relevant data in the same logtime pass.

Throughout the section, we will focus on proving item (1), which means that the order $\prec$ is globally fixed over $X$. We will assume that $X = \{x_1,\ldots,x_k\}$ and that $x_1 \prec \ldots \prec x_k$. At the end of the section, we will describe how to adapt the solution for item (2).

\subsection{Template for the access phase} \label{sec:template}

Let $t$ be the queried index, and let $\mu^*$ be the $t$-th mapping in the set $\sem{\cA}(w)$. The access phase consists in computing the values $\mu^*(x_1),\ldots,\mu^*(x_k)$ in order, one by one.

\begin{algorithm}[t]
	\caption{Simplified algorithm to retrieve the $t$-th element in $\sem{\cA}(w)$.}\label{alg:acc1}	
	\smallskip
	\begin{algorithmic}[1]
		\Procedure{{\sc Access}}{$\cA, w, t$} \label{alg1}
		\State {\bf declare} $s_1,\ldots,s_k$
		\For{$i \in [1,k]$}
		\State $(l,r) \gets (1, n)$
		\While{$l < r$}
		\State $m \gets {\sf mid}(l, r)$
		\If{$t \leq \sharp\sem{\cA}(w)\langle x_1\mapsto  s_1\, ,\, \ldots \, , \, x_{i-1} \mapsto s_{i-1}\, ,\, x_i \Mapsto [1, m]\rangle$}
		\State $(l, r) \gets (l, m)$
		\Else
		\State $(l, r) \gets (m+1, r)$
		\EndIf
		\EndWhile
		\State $s_i \gets l$
		\State $t \gets t - \sharp\sem{\cA}(w)\langle x_1\mapsto  s_1\, ,\, \ldots \, , \, x_{i-1} \mapsto s_{i-1}\, ,\, x_i \Mapsto [1, s_i - 1]\rangle$
		\EndFor
		\State {\bf return} $\{x_1\mapsto s_1\, ,\, \ldots \, , \, x_k \mapsto s_k\}$
		\EndProcedure
	\end{algorithmic}
\end{algorithm}

The access phase is described in Algorithm~\ref{alg:acc1}. 
The base idea is to find $\mu^*(x_1)$ with a binary search; then find $\mu^*(x_2)$ with a binary search that assumes the value $\mu^*(x_1)$ is fixed; then $\mu^*(x_3)$ with a binary search that assumes $\mu^*(x_1)$ and $\mu^*(x_2)$ are fixed; and so on.
The first binary search looks for $\mu^*(x_1)$ by characterizing it as a value $s_1$ that satisfies
\[
\sharp\sem{\cA}(w) \langle x_1\Mapsto[1,s_1-1]\rangle < t \leq \sharp\sem{\cA}(w) \langle x_1\Mapsto[1,s_1]\rangle.
\]
In the algorithm, we assume that we have access to a function ${\sf mid}:[1, n]\times[1,n]\to[1,n]$ that assigns a certain middle point $m := {\sf mid}(l, r)$ to every range $(l,r)$; for now, we only assume that this function is fixed at the beginning of the access phase, and that $l \leq m < r$.

The binary search in the algorithm iterates a range $(l, r)$ that starts as $(1, n)$ and is always guaranteed to contain the target value $s$.
As an invariant, we maintain the inequality $\sharp\sem{\cA}(w) \langle x_1\Mapsto[1,l]\rangle < t \leq \sharp\sem{\cA}(w)\langle x_1\Mapsto[1,r]\rangle$.
Each iteration decides whether to reduce $(l, r)$ into its first half $(l,m)$ or into its second half $(m+1,r)$, where $m = {\sf mid}(l, r)$, by counting how many answers in $\sem{\cA}(w)$ map $x_1$ to the range $(1, m)$. 
When the range reaches size 1 (i.e., $l = r$), we have only one choice for $s_1$ so we assign it to $\mu^*(x_1)$. 
Then, we subtract from $t$ the number of answers that map $x_1$ to a value lower than $s_1$ and continue to look for $\mu^*(x_2)$. Subtracting this value from $t$ lets us focus on the slice of $\sem{\cA}(w)$ that assumes that $x_1$ is mapped to $s_1$. The rest of $\mu^*$ is computed the same way.

In our running example, we can see the full list of mappings in $\sem{\cA^{\sf ex}}(w_0)$ in Figure~\ref{fig:re} (right). If the access index was $t = 5$, then we would find $s_1 = 3$ since $\sharp\sem{\cA^{\sf ex}}(w_0) \langle x_1\Mapsto[1,3-1]\rangle = 3 < 5$ and $\sharp\sem{\cA^{\sf ex}}(w_0) \langle x_1\Mapsto[1,3]\rangle = 5 \leq 5$. 
Then, before the next iteration, we need to subtract the value $\sharp\sem{\cA^{\sf ex}}(w_0) \langle x_1\Mapsto[1,3-1]\rangle = 3$ from $t$, and do the binary search with $t = 2$. This is because the search space was reduced to only the mappings that start with $x_1\mapsto 3$. In the end, we find $s_2 = 6$ given that $\sharp\sem{\cA^{\sf ex}}(w_0) \langle x_1\mapsto 3\,,\, x_2\Mapsto[1,6-1]\rangle = 1 < 2$ and $\sharp\sem{\cA^{\sf ex}}(w_0) \langle x_1\mapsto 3\, , \,x_2\Mapsto[1,6]\rangle = 2 \leq 2$ and so $\mu^* = \{x_1\mapsto 3, x_2\mapsto 6\}$,

\begin{lemma}\label{lem:alg1}
	 Algorithm~\ref{alg:acc1} finds the $t$-th answer in $\sem{\cA}(w)$.
\end{lemma}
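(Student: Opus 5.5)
We argue by induction on $i$ with an invariant on the slice being searched. For $i\in[0,k]$ let $\tau_i=\{x_1\mapsto s_1,\ldots,x_i\mapsto s_i\}$ and let $t_i$ be the value of $t$ after the $i$-th iteration of the outer loop, with $t_0=t$. The invariant is: (a) $\mu^*$ respects $\tau_i$, and (b) $\mu^*$ is the $t_i$-th element, with respect to $\prec$, of $\sem{\cA}(w)\langle\tau_i\rangle$. For $i=0$ this holds trivially. For $i=k$, the set $\sem{\cA}(w)\langle\tau_k\rangle$ contains at most one mapping, so by (a) the returned $\{x_1\mapsto s_1,\ldots,x_k\mapsto s_k\}$ equals $\mu^*$. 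The proof assumes $1\le t\le\sharp\sem{\cA}(w)$.

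\textbf{Slice structure.} We first record that the mappings of $\sem{\cA}(w)\langle\tau_{i-1}\rangle$ agree on $x_1,\ldots,x_{i-1}$. Hence, by the definition of the lexicographic order $\prec$, they are sorted by their value on $x_i$ first. This has two consequences:
\begin{itemize}
\item for every $m$, the set $\sem{\cA}(w)\langle\tau_{i-1},x_i\Mapsto[1,m]\rangle$ is a prefix of $\sem{\cA}(w)\langle\tau_{i-1}\rangle$;
\item the function $f(m)=\sharp\sem{\cA}(w)\langle\tau_{i-1},x_i\Mapsto[1,m]\rangle$ is nondecreasing in $m$, with $f(0)=0$ and $f(n)=\sharp\sem{\cA}(w)\langle\tau_{i-1}\rangle$.
\end{itemize}
Set $s=\mu^*(x_i)$. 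Using (b) for $i-1$, the element $\mu^*$ lies in the prefix counted by $f(m)$ if and only if $m\ge s$. Therefore $s$ is exactly the unique value with $f(s-1)<t_{i-1}\le f(s)$.

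\textbf{Binary search.} Next we show that the inner loop returns this $s$. The invariant for the while loop is $l\le s\le r$, equivalently $f(l-1)<t_{i-1}\le f(r)$. Note that the loop test in the algorithm is ``$t\le f(m)$''; we use the $l-1$ form of the invariant, which is the correct one, rather than the ``$[1,l]$'' form stated in the text. Initially $(l,r)=(1,n)$, and $f(0)=0<t_{i-1}\le f(n)$ holds by (b). In each step the algorithm takes $m={\sf mid}(l,r)$ with $l\le m<r$:
\begin{itemize}
\item if $t_{i-1}\le f(m)$, then $s\le m$ by monotonicity of $f$, so the new range $(l,m)$ keeps the invariant;
\item otherwise $f(m)<t_{i-1}$, so $s\ge m+1$, and the new range $(m+1,r)$ keeps it.
\end{itemize}
Since $l\le m<r$, the quantity $r-l$ strictly decreases, so the loop terminates, and at that point $l=r=s$. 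This gives (a) for $i$.

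\textbf{Update of $t$.} Finally we check (b) for $i$. By the slice structure, $\sem{\cA}(w)\langle\tau_i\rangle$ is the contiguous block of $\sem{\cA}(w)\langle\tau_{i-1}\rangle$ that comes immediately after the prefix counted by $f(s-1)$. So the rank of $\mu^*$ inside the block is $t_{i-1}-f(s-1)$, which is precisely the $t_i$ computed by the algorithm.

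The main care point is the contiguity (slice) claim, which justifies both the monotonicity used in the binary search and the subtraction in the update. It follows directly from the definition of $\prec$ on mappings. Unambiguity of $\cA$ is not needed for correctness of the algorithm itself, since the argument only uses counts of mappings in $\sem{\cA}(w)$; it only matters later, for computing these counts.
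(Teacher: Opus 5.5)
Your proof is correct and follows essentially the same route as the paper's argument in Section~\ref{sec:template}: use the contiguity of slices under the lexicographic order to characterize $\mu^*(x_i)$ as the unique $s$ with $f(s-1)<t\le f(s)$, find it by binary search, and subtract $f(s-1)$ to restrict to the block where $x_i\mapsto s$; you make this rigorous through an explicit induction on $i$. Your loop invariant $f(l-1)<t\le f(r)$ is the right one, whereas the paper's stated version with $[1,l]$ in place of $[1,l-1]$ can already fail at initialization (e.g.\ $t=1$ in the running example) and cannot hold once $l=r$.
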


The generic algorithm gives us the following insight: the relevant slices of $\sem{\cA}(w)$ are those defined by a fixed assignment for $x_1,\ldots,x_{i-1}$, and a rule of mapping of $x_i$ somewhere inside the range $[1,m]$ for some $m$. This will be crucial for our algorithm.

The running time of the procedure depends on the choice of {\sf mid} and the time needed to compute the sizes of these relevant slices. We leave this analysis for later.
\subparagraph{Links to the literature.} This idea of fixing a prefix of the variables to values, and~then computing the size of the corresponding slice of $\sem{\cA}(w)$ tends to be a core subroutine for~direct access~\cite{count1, count2}. 
We think that, conceptually, the most important refinement of our solution is to include the rule ``map the next variable in the prefix to a range'' to the core subroutine.

\subsection{The $\cM$ matrices}

As a way to bridge the gap between the relevant slices of $\sem{\cA}(w)$ and the data structure that we will present, we introduce a useful collection of matrices in $\bbN^{Q\times Q}$. 

First, we add some new features to mapping rules and their sets. We say that a mapping rule $x\Mapsto \alpha$ is {\em inside} $[l,r]$ if $\alpha\subseteq[l,r]$. Also, we will compose functional sets $\tau_1, \tau_2$ of mapping rules and define the operation $\tau_1\cdot \tau_2  := \{x\Mapsto\alpha\cup\beta\mid x\Mapsto\alpha\in\tau_1\text{ and }x\Mapsto\beta\in\tau_2\}$. Observe that if $\tau_1$ is a set of mapping rules inside $[l, m]$ and $\tau_2$ is a set of mapping rules inside $[m+1, r]$, then $\tau_1\cdot\tau_2$ is a set of mapping rules inside $[l, r]$.

For $1 \leq l \leq r \leq n$, and a set $\tau$ of mapping rules inside $[l,r]$, we define this matrix~in~$\bbN^{Q\times Q}$:
\[
\cM\langle l, r:\tau\rangle(p, q) := |\{\rho\mid\rho\text{ is an $(l, r)$-partial run of $\cA$ over $w$ from $p$ to $q$ \,s.t.\! $\mu_\rho$ respects $\tau$}\}|
\]
The principle behind these matrices is to group the partial mappings that make up the mappings in $\sem{\cA}(w)$ by three criteria: the range inside $[1,n]$ that the partial mapping was obtained from, the positions in $[1,n]$ where the variables in $X$ can appear, and the states that delimit their respective partial runs.

In our running example, we highlighted runs $\rho_1,\rho_2,\rho_3$ which render the first three mappings in $\sem{\cA^{\sf ex}}(w_0)$. Let us look at one $\cM$ matrix in particular: let $\tau_{x_1=1} := \{x_1\mapsto 1\}$, and let $(l, r) = (1, 5)$. Then we find that $\cM\langle 1, 5:\tau_{x_1=1}\rangle(q_0,q_1) = 1$, since there is one $(1,5)$-partial run over $w_0$ from $q_0$ to $q_1$ (it can be obtained from $\rho_3$); also, see that $\cM\langle 1, 5:\tau_1\rangle(q_0,q_2) = 2$ since this counts the $(1,5)$-partial runs obtained from $\rho_1$ and $\rho_2$.

The matrices let us summarize a lot of the counting logic in one composition:

\begin{lemma}\label{lem:ms}
	Let $1 \leq l \leq m < r \leq n$, let $\tau_L$ be a functional set of mapping rules inside $[l,m]$ and let $\tau_R$ be a functional set of mapping rules inside $[m+1,r]$. Then:
	\[
		\cM\langle l, r:\tau_L \cdot \tau_R\rangle = \cM\langle l, m:\tau_L\rangle \cdot \cM\langle m+1, r:\tau_R\rangle
	\]
\end{lemma}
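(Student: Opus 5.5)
The plan is to prove the identity entrywise by a bijection on partial runs. Fix states $p,q\in Q$. By the definition of matrix product, the right-hand side at $(p,q)$ equals $\sum_{o\in Q}\cM\langle l,m:\tau_L\rangle(p,o)\cdot\cM\langle m+1,r:\tau_R\rangle(o,q)$. This counts pairs $(\rho_L,\rho_R)$ such that $\rho_L$ is an $(l,m)$-partial run from $p$ to some $o$ with $\mu_{\rho_L}$ respecting $\tau_L$, and $\rho_R$ is an $(m+1,r)$-partial run from $o$ to $q$ with $\mu_{\rho_R}$ respecting $\tau_R$. I will give a bijection between these pairs and the $(l,r)$-partial runs $\rho$ from $p$ to $q$ such that $\mu_\rho$ respects $\tau_L\cdot\tau_R$. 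The map is splitting $\rho$ at position $m$: $\rho_L$ is its prefix ending in $q_m$, and $\rho_R$ is its suffix starting in $q_m$. The inverse map is concatenation.

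The first step is to check that both maps land in the right sets. A partial run is, by definition, a piece of some full run starting in $I$. The prefix of such a piece is again a piece of the same full run, so $\rho_L$ is an $(l,m)$-partial run. The suffix is handled the same way, so $\rho_R$ is an $(m+1,r)$-partial run. For concatenation, take $\rho_L$ (from $p$ to $o$) and $\rho_R$ (from $o$ to $q$). The transitions of $\rho_L\rho_R$ are valid and the states match at $o$. To extend it to a full run, reuse the prefix $q_0\cdots q_{l-1}$ from the run that witnesses $\rho_L$ and the suffix after $r$ from the run that witnesses $\rho_R$. Both gluing points agree in state, so the result is a run over $w$. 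Since partial runs are only required to extend to some run, and not necessarily to an accepting or valid one, nothing else needs checking. Splitting and concatenating are clearly mutually inverse.

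The second step, which I expect to be the only delicate point, is the correspondence between the respecting conditions. The extracted partial mapping satisfies $\mu_\rho=\mu_{\rho_L}\cup\mu_{\rho_R}$, with domains drawn from positions in $[l,m]$ and $[m+1,r]$ respectively. Since the sets are functional, for each $x$ there are unique rules $x\Mapsto\alpha\in\tau_L$ with $\alpha\subseteq[l,m]$ and $x\Mapsto\beta\in\tau_R$ with $\beta\subseteq[m+1,r]$. Then $\tau_L\cdot\tau_R$ contains $x\Mapsto\alpha\cup\beta$.

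For partial mappings, I read ``respects'' as applying to the variables in the domain, which matches the running example. Suppose $\mu_\rho$ respects $\tau_L\cdot\tau_R$ and $\mu_\rho(x)=i$. Because $\alpha$ and $\beta$ lie in disjoint intervals, $i\in\alpha$ if $i\le m$ and $i\in\beta$ if $i>m$. Hence $\mu_{\rho_L}$ respects $\tau_L$ and $\mu_{\rho_R}$ respects $\tau_R$; the converse is immediate.

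One subtlety concerns a variable that occurs in both halves of $\rho$, for instance in an invalid run. In that case $\mu_\rho$ as defined is a relation rather than a function, and I would require respecting for every pair in it. The same interval-disjointness argument then still gives the equivalence. This establishes the bijection for each pair $(p,q)$, and therefore the matrix equality.
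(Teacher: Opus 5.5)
Your proposal is correct and takes the same route as the paper's proof. Both split an $(l,r)$-partial run at position $m$, with concatenation as the inverse, and both use the fact that $\tau_L$ and $\tau_R$ live in the disjoint intervals $[l,m]$ and $[m+1,r]$ to transfer the respecting condition in each direction, giving $\cM\langle l,r:\tau_L\cdot\tau_R\rangle(p,q)=\sum_{o\in Q}\cM\langle l,m:\tau_L\rangle(p,o)\cdot\cM\langle m+1,r:\tau_R\rangle(o,q)$; your extra care about gluing the witnessing full runs and about how ``respects'' is read for partial, possibly non-functional, mappings fills in details the paper leaves implicit.
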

Lastly, we observe that the equality 
$\sharp\sem{\cA}(w)\langle\tau\rangle = \sum_{p\in I, q\in F}\cM\langle 1, n : \tau\rangle(p,q)$
 holds for any $\tau$ given that $\cA$ is functional and unambiguous.

\subsection{Preprocessing phase}

\begin{algorithm}[t]
	\caption{Preprocessing phase for direct access on $\sem{\cA}(w)$.}\label{alg:prep}	
	\smallskip
	\begin{algorithmic}[1]
	\hspace{-2.5em}
	\begin{varwidth}[t]{0.5\textwidth}
		\Procedure{{\sc Preprocessing}}{$\cA, w$} 
		\State {\bf declare} $\bbT_1,\ldots,\bbT_k$
		\For{$i \in [0,k]$}
		\State $\textsc{Build}(\bbT_i, 1, n)$
		\EndFor
		\EndProcedure
	\end{varwidth}
	\hspace{2.5em}
	\begin{varwidth}[t]{0.5\textwidth}
		\Procedure{{\sc Build}}{$\bbT_i, l, r$} 
		\State {\sf vars at} $\bbT_i\langle l,r\rangle \gets \{x_{i+1},\ldots,x_k\}$
		\If{$l = r$}
		\State $a\gets w[l]$
		\State $\bbT_i\langle l,r\rangle::M \gets \Delta^{\{x_{i+1}\, ,\,\ldots\, ,\,x_k\}}_a$
		\State {\bf return} $M$
		\Else 
		\State $m \gets {\sf mid}(l, r)$
		\State $M_1\gets \textsc{Build}(\bbT_i, l, m)$
		\State $M_2\gets \textsc{Build}(\bbT_i, m+1, r)$
		\State $\bbT_i\langle l, r\rangle :: M \gets M_1 \cdot M_2$
		\State {\bf return} $M$
		\EndIf
		\EndProcedure
	\end{varwidth}
	\end{algorithmic}
\end{algorithm}

Now we introduce the data structure used in the algorithm, whose main idea is to strategically make some of the $\cM$ matrices explicit. 
Let us now identify these matrices, and show why they are enough to compute the sizes of all the slices of $\sem{\cA}(w)$ 
that are relevant in the access phase.

We define a collection $\{\bbT_Y\}_{Y\subseteq X}$ of binary trees, each of which has $2\cdot n - 1$ nodes, and has the following recursive structure (it is the same across all trees): 
For a $\bbT \in \{\bbT_Y\}_{Y\subseteq X}$,
\begin{itemize}
\item every node in $\bbT$ is indexed by a pair $\langle l, r \rangle$ such that $1\leq l \leq r \leq n$;
\item the root of $\bbT$ is indexed $\bbT \langle 1,n \rangle$;
\item each node $\bbT\langle l, r\rangle$ with $l < r$ has a left child $\bbT\langle l, m\rangle$ and a right child $\bbT\langle m+1, r\rangle$, where $m := {\sf mid}(l, r)$;
\item and when $l = r$, the node $\bbT\langle l,r\rangle$ is a leaf.
\end{itemize}
Again, we assume that there is a fixed function ${\sf mid}$ that assigns a middle point $m$ to $(l, r)$ such that $l \leq m < r$. We call $(l,r)$ a {\em valid index} if it can be deduced by realizing the recursive process above with this ${\sf mid}$.
The only requirement for ${\sf mid}$ is to be {\em strongly balanced}:
after computing a tree $\bbT$ with this choice of ${\sf mid}$, for every node $\bbT\langle l, r\rangle$, it must hold that the height $h_1$ of $\bbT\langle l, m\rangle$, and the height $h_2$ of $\bbT\langle m+1,r\rangle$ satisfy that $h_1 - h_2 \in \{-1,0,1\}$.
We choose to describe the algorithm for a generic function ${\sf mid}$ to be able to reutilize the logic in the  SLP case (Section~\ref{sec:slps}). For strings, one can simply define ${\sf mid}(l, r) = \lfloor\frac{l+r}{2}\rfloor$.

The information that is stored in a $\bbT_{Y}$ in particular can be described by restrictions on the transition set of $\cA$. For $Y\subseteq X$ and $a\in\Sigma$, define $\Delta_a^{Y} := \{(p,(a, S),q)\in\Delta\mid S\subseteq Y\text{ for some $p,q\in Q$}\}$. 
We also overload this notation so that it represents the adjacency matrix of this set. 
In Figure~\ref{fig:tables}, we have written out these matrices for the automaton $\cA^{\sf ex}$, for the $Y$ values  $\{x_1,x_2\}$, $\{x_2\}$ and $\emptyset$.

\begin{figure}
\centering

\begin{tabular}{ c|ccc}
$ \Delta_{\sf a}^{\{x_1,x_2\}}$ & $q_0$ & $q_1$ & $q_2$ \\
\hline
$q_0$ & $1$ &  $1$ &  $0$ \\ 
$q_1$ & $0$ &  $1$ &  $0$ \\ 
$q_2$ & $0$ &  $0$ &  $1$ 
\end{tabular}
\hspace{1em}
\begin{tabular}{ c|ccc}
$\Delta_{\sf b}^{\{x_1,x_2\}}$ & $q_0$ & $q_1$ & $q_2$ \\
\hline
$q_0$ & $1$ &  $0$ &  $0$ \\ 
$q_1$ & $0$ &  $1$ &  $1$ \\ 
$q_2$ & $0$ &  $0$ &  $1$ 
\end{tabular}
\hspace{1em}
\begin{tabular}{ c|ccc}
$\Delta_{\sf c}^{\{x_1,x_2\}}$ & $q_0$ & $q_1$ & $q_2$ \\
\hline
$q_0$ & $1$ &  $0$ &  $1$ \\ 
$q_1$ & $0$ &  $0$ &  $0$ \\ 
$q_2$ & $0$ &  $0$ &  $1$ 
\end{tabular}

\vspace{1em}
\begin{tabular}{ c|ccc}
$\ \, \Delta_{\sf a}^{\{x_2\}}\ \, $ & $q_0$ & $q_1$ & $q_2$ \\
\hline
$q_0$ & $1$ &  $\color{blue}{\bf 0}$ &  $0$ \\ 
$q_1$ & $0$ &  $1$ &  $0$ \\ 
$q_2$ & $0$ &  $0$ &  $1$ 
\end{tabular}
\hspace{1em}
\begin{tabular}{ c|ccc}
$\, \ \Delta_{\sf b}^{\{x_2\}}\, \ $ & $q_0$ & $q_1$ & $q_2$ \\
\hline
$q_0$ & $1$ &  $0$ &  $0$ \\ 
$q_1$ & $0$ &  $1$ &  $1$ \\ 
$q_2$ & $0$ &  $0$ &  $1$ 
\end{tabular}
\hspace{1em}
\begin{tabular}{ c|ccc}
$\, \ \Delta_{\sf c}^{\{x_2\}}\, \ $ & $q_0$ & $q_1$ & $q_2$ \\
\hline
$q_0$ & $1$ &  $0$ &  $\color{blue}{\bf 0}$ \\ 
$q_1$ & $0$ &  $0$ &  $0$ \\ 
$q_2$ & $0$ &  $0$ &  $1$ 
\end{tabular}

\newcommand{\asduasgy}{10.5pt}

\vspace{1em}

\hspace{1pt}
\begin{tabular}{ c|ccc}
$\hspace{\asduasgy} \Delta_{\sf a}^{\emptyset}\hspace{\asduasgy} $& $q_0$ & $q_1$ & $q_2$ \\
\hline
$q_0$ & $1$ &  $0$ &  $0$ \\ 
$q_1$ & $0$ &  $1$ &  $0$ \\ 
$q_2$ & $0$ &  $0$ &  $1$ 
\end{tabular}
\hspace{1em}
\begin{tabular}{ c|ccc}
$\hspace{\asduasgy} \Delta_{\sf b}^{\emptyset} \hspace{\asduasgy}  $ & $q_0$ & $q_1$ & $q_2$ \\
\hline
$q_0$ & $1$ &  $0$ &  $0$ \\ 
$q_1$ & $0$ &  $1$ &  $\color{blue}{\bf 0}$ \\ 
$q_2$ & $0$ &  $0$ &  $1$ 
\end{tabular}
\hspace{1em}
\begin{tabular}{ c|ccc}
$ \hspace{\asduasgy}  \Delta_{\sf c}^{\emptyset} \hspace{\asduasgy} $ & $q_0$ & $q_1$ & $q_2$ \\
\hline
$q_0$ & $1$ &  $0$ &  $0$ \\ 
$q_1$ & $0$ &  $0$ &  $0$ \\ 
$q_2$ & $0$ &  $0$ &  $1$ 
\end{tabular}

\caption{The transition matrices for $\cA^{\sf ex}$ after being restricted by different sets $Y\subseteq X$. The bold blue values highlight the only differences of a matrix with the one above}\label{fig:tables}
\end{figure}

Let us now describe the data stored at every node in each $\bbT_Y$. 
Each $\bbT_Y\langle l,r\rangle$, besides its children (which are accessed by computing $m\gets {\sf mid}(l,r)$ via indices $\langle l, m\rangle$ and $\langle m+1,r\rangle$), contains a matrix $M\in \bbN^{Q\times Q}$, which we notate $\bbT_Y\langle l,r\rangle::M$, and a set of variables which is initialized as $Y$ and indicated in the algorithm by ``$\textsf{vars in }\bbT_Y\langle l,r\rangle$''.

The matrix $M$ stored in $\bbT_Y\langle l,r\rangle$ is recursively defined as follows: 
if $l < r$, then let $M_1$ and $M_2$ be matrices in its left and right child, respectively, and assign $M\gets M_1\cdot M_2$;
otherwise, if $l = r$, then $M$ is assigned $\Delta^Y_a$ where $a = w[l]$. We write out two of these trees for the running example in Figure~\ref{fig:prep}.
\newcommand{\sdssdf}{1.0}
\begin{figure}[t]
	\centering
	
	\resizebox{7cm}{!}{
			\begin{tikzpicture}[xscale=0.75,yscale=0.9,>=stealth',shorten >=1pt,auto,node distance=2cm,dashed,state/.style={circle,draw}, color=black, initial text= {},
		initial distance={5mm}]
		
		\setlength\arraycolsep{1pt}
		\renewcommand\arraystretch{0.8}
		
		\node (1-9) at (0,6) { %
\scalebox{\sdssdf}{
$\begin{pmatrix}
	1 & 1 & 8\\
	0 & 0 & 3\\
	0 & 0 & 1\\
\end{pmatrix}$}};

		\node (1-4) at (-4,4) { %
\scalebox{\sdssdf}{
$\begin{pmatrix}
	1 & 2 & 3\\
	0 & 1 & 2\\
	0 & 0 & 1\\
\end{pmatrix}$}};
		\node (5-9) at (4,4) { %
\scalebox{\sdssdf}{
$\begin{pmatrix}
	1 & 1 & 3\\
	0 & 0 & 1\\
	0 & 0 & 1\\
\end{pmatrix}$}};

		\node (1-2) at (-6,2) { %
\scalebox{\sdssdf}{
$\begin{pmatrix}
	1 & 1 & 1\\
	0 & 1 & 1\\
	0 & 0 & 1\\
\end{pmatrix}$}};
		\node (3-4) at (-2,2) { %
\scalebox{\sdssdf}{
$\begin{pmatrix}
	1 & 1 & 1\\
	0 & 1 & 1\\
	0 & 0 & 1\\
\end{pmatrix}$}};
		\node (5-7) at (2,2) { %
\scalebox{\sdssdf}{
$\begin{pmatrix}
	1 & 0 & 2\\
	0 & 0 & 1\\
	0 & 0 & 1\\
\end{pmatrix}$}};
		\node (8-9) at (6,2) { %
\scalebox{\sdssdf}{
$\begin{pmatrix}
	1 & 1 & 1\\
	0 & 1 & 1\\
	0 & 0 & 1\\
\end{pmatrix}$}};

		\node (1-1) at (-7,0) { %
\scalebox{\sdssdf}{
$\begin{pmatrix}
	1 & 1 & 0\\
	0 & 1 & 0\\
	0 & 0 & 1\\
\end{pmatrix}$}};
		\node (2-2) at (-5,0) { %
\scalebox{\sdssdf}{
$\begin{pmatrix}
	1 & 0 & 0\\
	0 & 1 & 1\\
	0 & 0 & 1\\
\end{pmatrix}$}};
		\node (3-3) at (-3,0) { %
\scalebox{\sdssdf}{
$\begin{pmatrix}
	1 & 1 & 0\\
	0 & 1 & 0\\
	0 & 0 & 1\\
\end{pmatrix}$}};
		\node (4-4) at (-1,0) { %
\scalebox{\sdssdf}{
$\begin{pmatrix}
	1 & 0 & 0\\
	0 & 1 & 1\\
	0 & 0 & 1\\
\end{pmatrix}$}};
		\node (5-6) at (1,0) { %
\scalebox{\sdssdf}{
$\begin{pmatrix}
	1 & 1 & 1\\
	0 & 1 & 1\\
	0 & 0 & 1\\
\end{pmatrix}$}};
		\node (7-7) at (3,0) { %
\scalebox{\sdssdf}{
$\begin{pmatrix}
	1 & 0 & 1\\
	0 & 0 & 0\\
	0 & 0 & 1\\
\end{pmatrix}$}};
		\node (8-8) at (5,0) { %
\scalebox{\sdssdf}{
$\begin{pmatrix}
	1 & 1 & 0\\
	0 & 1 & 0\\
	0 & 0 & 1\\
\end{pmatrix}$}};
		\node (9-9) at (7,0) { %
\scalebox{\sdssdf}{
$\begin{pmatrix}
	1 & 0 & 0\\
	0 & 1 & 1\\
	0 & 0 & 1\\
\end{pmatrix}$}};

		\node (5-5) at (0,-2) { %
\scalebox{\sdssdf}{
$\begin{pmatrix}
	1 & 1 & 0\\
	0 & 1 & 0\\
	0 & 0 & 1\\
\end{pmatrix}$}};
		\node (6-6) at (2,-2) { %
\scalebox{\sdssdf}{
$\begin{pmatrix}
	1 & 0 & 0\\
	0 & 1 & 1\\
	0 & 0 & 1\\
\end{pmatrix}$}};

		\node[above = -0.1cm of 1-9] (1-9b)  {\ {\small $\langle1,9\rangle$:\!\!}};
		\node[above = -0.1cm of 1-4] (1-5b)  {\ {\small $\langle1,4\rangle$:\!\!}};
		\node[above = -0.1cm of 5-9] (6-9b)  {\ {\small $\langle5,9\rangle$:\!\!}};
		\node[above = -0.1cm of 1-2] (1-3b)  {\ {\small $\langle1,2\rangle$:\!\!}};
		\node[above = -0.1cm of 3-4] (4-5b)  {\ {\small $\langle3,4\rangle$:\!\!}};
		\node[above = -0.1cm of 5-7] (6-7b)  {\ {\small $\langle5,7\rangle$:\!\!}};
		\node[above = -0.1cm of 8-9] (8-9b)  {\ {\small $\langle8,9\rangle$:\!\!}};
		\node[above = -0.1cm of 1-1] (1-1b)  {\ {\small $\langle1,1\rangle$:\!\!}};
		\node[above = -0.1cm of 2-2] (2-2b)  {\ {\small $\langle2,2\rangle$:\!\!}};
		\node[above = -0.1cm of 3-3] (3-3b)  {\ {\small $\langle3,3\rangle$:\!\!}};
		\node[above = -0.1cm of 4-4] (4-4b)  {\ {\small $\langle4,4\rangle$:\!\!}};
		\node[above = -0.1cm of 5-6] (6-7b)  {\ {\small $\langle5,6\rangle$:\!\!}};
		\node[above = -0.1cm of 7-7] (7-7b)  {\ {\small $\langle7,7\rangle$:\!\!}};
		\node[above = -0.1cm of 8-8] (8-8b)  {\ {\small $\langle8,8\rangle$:\!\!}};
		\node[above = -0.1cm of 9-9] (9-9b)  {\ {\small $\langle9,9\rangle$:\!\!}};
		\node[above = -0.1cm of 5-5] (5-5b)  {\ {\small $\langle5,5\rangle$:\!\!}};
		\node[above = -0.1cm of 6-6] (6-6b)  {\ {\small $\langle6,6\rangle$:\!\!}};
		\node[below = -0.1cm of 1-1] (1-1b)  {$a$};
		\node[below = -0.1cm of 2-2] (2-2b)  {$b$};
		\node[below = -0.1cm of 3-3] (3-3b)  {$a$};
		\node[below = -0.1cm of 4-4] (4-4b)  {$b$};
		\node[below = -0.1cm of 7-7] (7-7b)  {{$c$}};
		\node[below = -0.1cm of 8-8] (8-8b)  {{$a$}};
		\node[below = -0.1cm of 9-9] (9-9b)  {{$b$}};
		\node[below = -0.1cm of 5-5] (5-5b)  {{$a$}};
		\node[below = -0.1cm of 6-6] (6-6b)  {{$b$}};

\draw (1-9) to (1-4);
\draw (1-9) to (5-9);
\draw (1-4) to (1-2);
\draw (1-4) to (3-4);
\draw (5-9) to (5-7);
\draw (5-9) to (8-9);
\draw (1-2) to (1-1);
\draw (1-2) to (2-2);
\draw (3-4) to (3-3);
\draw (3-4) to (4-4);
\draw (5-7) to (5-6);
\draw (5-7) to (7-7);
\draw (8-9) to (8-8);
\draw (8-9) to (9-9);
\draw (5-6) to (5-5);
\draw (5-6) to (6-6);
 
	\end{tikzpicture}	
		}
		\!\!\!\!
	\resizebox{7cm}{!}{
			\begin{tikzpicture}[xscale=0.75,yscale=0.9,>=stealth',shorten >=1pt,auto,node distance=2cm,dashed,state/.style={circle,draw}, color=black, initial text= {},
		initial distance={5mm}]

		\setlength\arraycolsep{1pt}
		\renewcommand\arraystretch{0.8}
			
		\node (1-9) at (0,6) { %
\scalebox{\sdssdf}{
$\begin{pmatrix}
	1 & 0 & 0\\
	0 & 0 & 3\\
	0 & 0 & 1\\
\end{pmatrix}$}};

		\node (1-4) at (-4,4) { %
\scalebox{\sdssdf}{
$\begin{pmatrix}
	1 & 0 & 0\\
	0 & 1 & 2\\
	0 & 0 & 1\\
\end{pmatrix}$}};
		\node (5-9) at (4,4) { %
\scalebox{\sdssdf}{
$\begin{pmatrix}
	1 & 0 & 0\\
	0 & 0 & 1\\
	0 & 0 & 1\\
\end{pmatrix}$}};

		\node (1-2) at (-6,2) { %
\scalebox{\sdssdf}{
$\begin{pmatrix}
	1 & 0 & 0\\
	0 & 1 & 1\\
	0 & 0 & 1\\
\end{pmatrix}$}};
		\node (3-4) at (-2,2) { %
\scalebox{\sdssdf}{
$\begin{pmatrix}
	1 & 0 & 0\\
	0 & 1 & 1\\
	0 & 0 & 1\\
\end{pmatrix}$}};
		\node (5-7) at (2,2) { %
\scalebox{\sdssdf}{
$\begin{pmatrix}
	1 & 0 & 0\\
	0 & 0 & 1\\
	0 & 0 & 1\\
\end{pmatrix}$}};
		\node (8-9) at (6,2) { %
\scalebox{\sdssdf}{
$\begin{pmatrix}
	1 & 0 & 0\\
	0 & 1 & 1\\
	0 & 0 & 1\\
\end{pmatrix}$}};

		\node (1-1) at (-7,0) { %
\scalebox{\sdssdf}{
$\begin{pmatrix}
	1 & 0 & 0\\
	0 & 1 & 0\\
	0 & 0 & 1\\
\end{pmatrix}$}};
		\node (2-2) at (-5,0) { %
\scalebox{\sdssdf}{
$\begin{pmatrix}
	1 & 0 & 0\\
	0 & 1 & 1\\
	0 & 0 & 1\\
\end{pmatrix}$}};
		\node (3-3) at (-3,0) { %
\scalebox{\sdssdf}{
$\begin{pmatrix}
	1 & 0 & 0\\
	0 & 1 & 0\\
	0 & 0 & 1\\
\end{pmatrix}$}};
		\node (4-4) at (-1,0) { %
\scalebox{\sdssdf}{
$\begin{pmatrix}
	1 & 0 & 0\\
	0 & 1 & 1\\
	0 & 0 & 1\\
\end{pmatrix}$}};
		\node (5-6) at (1,0) { %
\scalebox{\sdssdf}{
$\begin{pmatrix}
	1 & 0 & 0\\
	0 & 1 & 1\\
	0 & 0 & 1\\
\end{pmatrix}$}};
		\node (7-7) at (3,0) { %
\scalebox{\sdssdf}{
$\begin{pmatrix}
	1 & 0 & 0\\
	0 & 0 & 0\\
	0 & 0 & 1\\
\end{pmatrix}$}};
		\node (8-8) at (5,0) { %
\scalebox{\sdssdf}{
$\begin{pmatrix}
	1 & 0 & 0\\
	0 & 1 & 0\\
	0 & 0 & 1\\
\end{pmatrix}$}};
		\node (9-9) at (7,0) { %
\scalebox{\sdssdf}{
$\begin{pmatrix}
	1 & 0 & 0\\
	0 & 1 & 1\\
	0 & 0 & 1\\
\end{pmatrix}$}};

		\node (5-5) at (0,-2) { %
\scalebox{\sdssdf}{
$\begin{pmatrix}
	1 & 0 & 0\\
	0 & 1 & 0\\
	0 & 0 & 1\\
\end{pmatrix}$}};
		\node (6-6) at (2,-2) { %
\scalebox{\sdssdf}{
$\begin{pmatrix}
	1 & 0 & 0\\
	0 & 1 & 1\\
	0 & 0 & 1\\
\end{pmatrix}$}};

		\node[above = -0.1cm of 1-9] (1-9b)  {\ {\small $\langle1,9\rangle$:\!\!}};
		\node[above = -0.1cm of 1-4] (1-5b)  {\ {\small $\langle1,4\rangle$:\!\!}};
		\node[above = -0.1cm of 5-9] (6-9b)  {\ {\small $\langle5,9\rangle$:\!\!}};
		\node[above = -0.1cm of 1-2] (1-3b)  {\ {\small $\langle1,2\rangle$:\!\!}};
		\node[above = -0.1cm of 3-4] (4-5b)  {\ {\small $\langle3,4\rangle$:\!\!}};
		\node[above = -0.1cm of 5-7] (6-7b)  {\ {\small $\langle5,7\rangle$:\!\!}};
		\node[above = -0.1cm of 8-9] (8-9b)  {\ {\small $\langle8,9\rangle$:\!\!}};
		\node[above = -0.1cm of 1-1] (1-1b)  {\ {\small $\langle1,1\rangle$:\!\!}};
		\node[above = -0.1cm of 2-2] (2-2b)  {\ {\small $\langle2,2\rangle$:\!\!}};
		\node[above = -0.1cm of 3-3] (3-3b)  {\ {\small $\langle3,3\rangle$:\!\!}};
		\node[above = -0.1cm of 4-4] (4-4b)  {\ {\small $\langle4,4\rangle$:\!\!}};
		\node[above = -0.1cm of 5-6] (6-7b)  {\ {\small $\langle5,6\rangle$:\!\!}};
		\node[above = -0.1cm of 7-7] (7-7b)  {\ {\small $\langle7,7\rangle$:\!\!}};
		\node[above = -0.1cm of 8-8] (8-8b)  {\ {\small $\langle8,8\rangle$:\!\!}};
		\node[above = -0.1cm of 9-9] (9-9b)  {\ {\small $\langle9,9\rangle$:\!\!}};
		\node[above = -0.1cm of 5-5] (5-5b)  {\ {\small $\langle5,5\rangle$:\!\!}};
		\node[above = -0.1cm of 6-6] (6-6b)  {\ {\small $\langle6,6\rangle$:\!\!}};
		\node[below = -0.1cm of 1-1] (1-1b)  {$a$};
		\node[below = -0.1cm of 2-2] (2-2b)  {$b$};
		\node[below = -0.1cm of 3-3] (3-3b)  {$a$};
		\node[below = -0.1cm of 4-4] (4-4b)  {$b$};
		\node[below = -0.1cm of 7-7] (7-7b)  {{$c$}};
		\node[below = -0.1cm of 8-8] (8-8b)  {{$a$}};
		\node[below = -0.1cm of 9-9] (9-9b)  {{$b$}};
		\node[below = -0.1cm of 5-5] (5-5b)  {{$a$}};
		\node[below = -0.1cm of 6-6] (6-6b)  {{$b$}};

\draw (1-9) to (1-4);
\draw (1-9) to (5-9);
\draw (1-4) to (1-2);
\draw (1-4) to (3-4);
\draw (5-9) to (5-7);
\draw (5-9) to (8-9);
\draw (1-2) to (1-1);
\draw (1-2) to (2-2);
\draw (3-4) to (3-3);
\draw (3-4) to (4-4);
\draw (5-7) to (5-6);
\draw (5-7) to (7-7);
\draw (8-9) to (8-8);
\draw (8-9) to (9-9);
\draw (5-6) to (5-5);
\draw (5-6) to (6-6);
 
	\end{tikzpicture}	
		}

        	\caption{The trees $\bbT_0 = \bbT_{\{x_1,x_2\}}$ (on the left) and $\bbT_1 = \bbT_{\{x_2\}}$ built from $\cA^{\sf ex}$.}
	\label{fig:prep}
\end{figure}

This procedure is described in Algorithm~\ref{alg:prep}, albeit with a single subtlety. 
To prove item~(1) in Theorem~\ref{theo:words}, we do not need the full set $\{\bbT_Y\}_{Y\subseteq X}$ (although we do need it for item~(2)). Instead, we use just the sequence of $k+1$ trees $\bbT_0 := \bbT_{\{x_1,\dots,x_k\}}, \bbT_1 := \bbT_{\{x_2,\dots,x_k\}}, \ldots,$ $\bbT_{k-1} := \bbT_{\{x_k\}},\bbT_k := \bbT_\emptyset$.
This completes the description of the preprocessing~phase. 
\subparagraph{Logic of the stored matrices} We can interpret the logic of the data structure by an auxiliary set of mapping rules: for an $i\in [0,k]$ and $1 \leq l \leq r \leq n$, we define a set of mapping rules inside $[l,r]$ by
$
\tau^{[l,r]}_{> i} := \{x_1\Mapsto \emptyset\ ,\  \ldots\ ,\  x_{i}\Mapsto \emptyset\ ,\  x_{i+1}\Mapsto [l,r] \ ,\   \ldots\ ,\  x_k\Mapsto [l,r] \}.
$
Intuitively, this set is a simple filter that disallows any partial mapping that mentions any variable $x \prec x_i$.
For a simpler notation, instead of $\cM\langle l, r:\tau^{[l,r]}_{> i}\rangle$, we write $\cM\langle l, r:\tau_{> i}\rangle$.
\begin{lemma}\label{lem:matrices}
At the end of Algorithm~\ref{alg:prep}, the matrix stored in $\bbT_i\langle l,r\rangle$ is equal to $\cM\langle l, r:\tau_{> i}\rangle$ for every valid index $(l,r)$.
\end{lemma}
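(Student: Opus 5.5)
We argue by structural induction on the tree $\bbT_i$, i.e., by induction on $r-l$ over the valid indices $(l,r)$, following the recursion of \textsc{Build}. Throughout we fix $i\in[0,k]$ and write $Y_i=\{x_{i+1},\ldots,x_k\}$, so that $\bbT_i=\bbT_{Y_i}$.

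\emph{Base case} ($l=r$). The stored matrix is $\Delta^{Y_i}_a$ with $a=w[l]$, so we must show $\Delta^{Y_i}_a(p,q)=\cM\langle l,l:\tau_{>i}\rangle(p,q)$. An $(l,l)$-partial run from $p$ to $q$ is a single transition $p\xrightarrow{a,S}q$. Its partial mapping is $\{x\mapsto l\mid x\in S\}$, and this respects $\tau^{[l,l]}_{>i}$ exactly when $S\cap\{x_1,\ldots,x_i\}=\emptyset$, i.e.\ $S\subseteq Y_i$. The rules $x_j\Mapsto[l,l]$ for $j>i$ are automatically satisfied.

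One subtlety must be checked: an $(l,r)$-partial run is defined as a fragment of some full run over $w$. So we need every transition reading $w[l]$ to extend to a full run, or else we read the definition as counting fragments of arbitrary transition sequences. The matrix-product identity of Lemma~\ref{lem:ms} already implicitly treats partial runs as plain transition sequences on $w[l,r]$, since products count all concatenations. I will adopt that reading, stating it explicitly or assuming $\cA$ is trimmed. Under it, each transition $(p,(a,S),q)\in\Delta$ with $S\subseteq Y_i$ gives a distinct partial run. Since a run is a sequence recording the label $S$, the count is the number of such transitions, which is what the overloaded adjacency matrix $\Delta^{Y_i}_a$ records (taken as counting transitions, not a $0/1$ matrix, in case of parallel transitions with different $S$).

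\emph{Inductive step} ($l<r$). Let $m={\sf mid}(l,r)$, so $(l,m)$ and $(m+1,r)$ are valid indices with smaller width. By the induction hypothesis the children store $\cM\langle l,m:\tau_{>i}\rangle$ and $\cM\langle m+1,r:\tau_{>i}\rangle$, and the node stores their product. The main step is the identity
\[
\tau^{[l,m]}_{>i}\cdot\tau^{[m+1,r]}_{>i}=\tau^{[l,r]}_{>i}.
\]
This is immediate from the definition of $\cdot$: $\emptyset\cup\emptyset=\emptyset$ for $x_1,\ldots,x_i$, and $[l,m]\cup[m+1,r]=[l,r]$ for the others. Both factors are functional sets of mapping rules inside $[l,m]$ and $[m+1,r]$ respectively, so Lemma~\ref{lem:ms} applies and yields $\cM\langle l,r:\tau_{>i}\rangle=\cM\langle l,m:\tau_{>i}\rangle\cdot\cM\langle m+1,r:\tau_{>i}\rangle$, as required.

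The only real obstacle is the base case: aligning the combinatorial definition of partial runs with the adjacency matrix $\Delta^{Y_i}_a$, namely the reachability/trimness caveat and multiplicities. The inductive step is a direct appeal to Lemma~\ref{lem:ms}.
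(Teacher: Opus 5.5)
Your proof is correct. It is the argument the paper leaves implicit, since the paper states this lemma without proof as straightforward: induction along the \textsc{Build} recursion, with the leaf case read off from $\Delta^{Y_i}_a$ and the internal case given by Lemma~\ref{lem:ms} via the identity $\tau^{[l,m]}_{>i}\cdot\tau^{[m+1,r]}_{>i}=\tau^{[l,r]}_{>i}$.

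Reading partial runs as plain transition sequences on $w[l,r]$, and $\Delta^{Y_i}_a$ as counting transitions, is the right call. However, drop the alternative ``or assume $\cA$ is trimmed'', because trimness does not make every transition at position $l$ extendable to a full run. For instance, in the running example $\bbT_0\langle 1,1\rangle$ has entry $1$ at $(q_1,q_1)$, although every run is in $q_0$ before reading $w[1]$.
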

\subparagraph{Complexity analysis}
We briefly observe that every tree has a total of $2\cdot n - 1$ nodes, each of which holds a matrix in $\bbN^{Q\times Q}$ which is computed by a single matrix multiplication. As we store a total of $k+1$ trees, we conclude that the preprocessing phase takes $O(|Q|^{\omega}\cdot|X|\cdot|w|)$.

\subsection{The {\sc Update} subroutine}

Before moving on to describe the access phase in our algorithm, we introduce an auxiliary subroutine, described in Algorithm~\ref{alg:upd}. Lemma~\ref{lem:matrices} implies that, at the end of the preprocessing phase, the matrices in $\bbT_{i}$ will count partial runs that do not see any variable among $x_1,\ldots,x_i$. The goal of the {\sc Update} subroutine is to modify $\bbT_{i}$ so that these matrices count partial runs that see $x_1$ in position $s_1$, see $x_2$ in position $s_2$, and so on. We describe it as a procedure that modifies some version of a $\bbT_i$ (after possible updates) to include a single mapping $x\mapsto s$. In the access phase, in the $i$-th iteration (i.e., after computing the value $s_i$), we will call it once for each $\bbT_j$ such that $j > i$.

\begin{lemma}\label{lem:upd}
Fix $\bbT$ to be some $\bbT_i$ that has been possibly already updated. 
	Let $(l,r)$ be a valid index and define $\tau^{\sf curr}_{l,r}$ as the functional set of rules inside $[l,r]$ such that the matrix in $\bbT\langle l, r\rangle$ is equal to $\cM\langle l, r : \tau^{\sf curr}_{l,r}\rangle$; and assume that $(x\Mapsto\emptyset) \in \tau^{\sf curr}_{l,r}$ for some $x\in X$. Call $\bbT'$ the tree that results from performing {\sc Update}$(\bbT, x\mapsto s)$. Then:
	\begin{itemize}
	\item if $s\in[l,r]$, the matrix in $\bbT'\langle l, r\rangle$ is equal to $\cM\langle l, r : \tau^{\sf new}_{l,r}\rangle$ where $\tau^{\sf new}_{l,r}$ is the set that results from replacing $x\Mapsto \emptyset$ in $\tau^{\sf curr}_{l,r}$ by $x\mapsto s$.
	\item if $s\not\in[l,r]$, the matrix in $\bbT'\langle l, r\rangle$ is equal to $\cM\langle l, r : \tau^{\sf curr}_{l,r}\rangle$.
	\end{itemize}
\end{lemma}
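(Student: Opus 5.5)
We have not seen Algorithm~\ref{alg:upd} itself, so we take {\sc Update}$(\bbT, x\mapsto s)$ to be the natural root-to-leaf procedure. It descends from $\bbT\langle 1,n\rangle$ along the unique path of valid indices $(l,r)$ with $s\in[l,r]$. At the leaf $\bbT\langle s,s\rangle$ it replaces the stored matrix by the adjacency matrix of the transitions $(p,(w[s],S),q)$ that respect the new rules. These are the transitions whose set $S$ contains $x$, and whose remaining variables are each either allowed at position $s$ by $\tau^{\sf curr}_{s,s}$ or forbidden there. Going back up, it recomputes $M\gets M_1\cdot M_2$ at every node on that path. The proof is a structural induction over the tree, bottom-up, i.e.\ on $r-l$, with the two items of the statement handled together.

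The second item is immediate. If $s\notin[l,r]$, the node $\bbT\langle l,r\rangle$ and its whole subtree lie off the path, because every node on the path has $s$ in its range. So its matrix is untouched and still equals $\cM\langle l,r:\tau^{\sf curr}_{l,r}\rangle$. Replacing $x\Mapsto\emptyset$ by $x\mapsto s$ would also change nothing inside $[l,r]$, since both rules forbid $x$ there.

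For the first item with $l=r=s$, we check directly from the definition of $\cM$ that an $(s,s)$-partial run is a single transition. Its partial mapping respects $\tau^{\sf new}_{s,s}$ exactly when its label contains $x$ and otherwise obeys $\tau^{\sf curr}_{s,s}$. This is precisely the matrix the leaf step writes.

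For $l<r$ with $s\in[l,r]$, let $m={\sf mid}(l,r)$. Then $s$ lies in exactly one child range, say $[l,m]$; the other case is symmetric. We use the fact that the current rule sets decompose, $\tau^{\sf curr}_{l,r}=\tau^{\sf curr}_{l,m}\cdot\tau^{\sf curr}_{m+1,r}$. This is the main obstacle, because the lemma only posits some $\tau^{\sf curr}$ at each node. We will justify it as an invariant that holds after preprocessing by Lemma~\ref{lem:matrices}, since $\tau^{[l,r]}_{>i}$ splits as $\tau^{[l,m]}_{>i}\cdot\tau^{[m+1,r]}_{>i}$, and that every previous {\sc Update} preserves. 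We may choose $\tau^{\sf curr}$ with this property, because the matrices are determined by the counts, and the defining rules of each node are the restrictions of one global set of rules $\{x_j\mapsto s_j\}_{j\le i'}\cup\{x_j\Mapsto\emptyset \text{ or } [1,n]\}$ to its range. The same decomposition holds for $\tau^{\sf new}$: $\tau^{\sf new}_{l,r}=\tau^{\sf new}_{l,m}\cdot\tau^{\sf curr}_{m+1,r}$, because $x\Mapsto\emptyset$ in the right child combined with $x\mapsto s$ in the left child gives $x\mapsto s$. By the induction hypothesis the new left child holds $\cM\langle l,m:\tau^{\sf new}_{l,m}\rangle$, and the right child is unchanged by the second item. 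Lemma~\ref{lem:ms} then gives that the recomputed product equals $\cM\langle l,r:\tau^{\sf new}_{l,r}\rangle$, which closes the induction.
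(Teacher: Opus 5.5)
The paper gives no proof of this lemma and treats it as straightforward. Your bottom-up induction is the intended argument: nodes off the path to $s$ are untouched, and Lemma~\ref{lem:ms} is applied at each node on the path. Your point that the rule sets must be fixed canonically is a real one that the statement glosses over. You take them as restrictions of one global rule set, so that $\tau^{\sf curr}_{l,r}=\tau^{\sf curr}_{l,m}\cdot\tau^{\sf curr}_{m+1,r}$ and $x\Mapsto\emptyset$ is inherited by the children.

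The error is in the base case. You claim a transition $(p,(w[s],S),q)$ respects $\tau^{\sf new}_{s,s}$ exactly when $x\in S$ and the rest obeys $\tau^{\sf curr}_{s,s}$. Accordingly, your leaf step keeps only transitions whose label contains $x$. Under the paper's semantics, a partial mapping that leaves a variable undefined satisfies every rule on that variable. This is the only reading under which $x\Mapsto\emptyset$ means ``$x$ is not placed in this range'' rather than being unsatisfiable. So a transition with $x\notin S$ also respects $x\mapsto s$, and $\cM\langle s,s:\tau^{\sf new}_{s,s}\rangle=\Delta^{Z\cup\{x\}}_{w[s]}$, where $Z$ is the set of variables not forbidden at the leaf. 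This is exactly what Algorithm~\ref{alg:upd} writes. It is also what Figure~\ref{fig:slp2} shows for the new leaf $S'_{\sf a}$: its diagonal ones come from $\emptyset$-labelled transitions, which your version would drop.

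Your reading also contradicts Lemma~\ref{lem:matrices}, on which you base your invariant. At a leaf, the rule $x_j\Mapsto[s,s]$ is literally $x_j\mapsto s$, yet $\Delta^{Y}_{a}$ also counts the transitions with $S=\emptyset$.

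The repair is local. At the leaf, a transition respects $\tau^{\sf new}_{s,s}$ iff $S\setminus\{x\}$ respects $\tau^{\sf curr}_{s,s}$, i.e.\ iff $S\subseteq Z\cup\{x\}$. To tie this to the actual algorithm, add to your invariant that the field \textsf{vars at} of each leaf is exactly this $Z$. That holds after preprocessing, where it is $\{x_{i+1},\ldots,x_k\}$, and is maintained because {\sc Update} adds $x$ to it. With this base case, your inductive step goes through unchanged.

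Since $\cA$ is functional, your stricter leaf would still give the correct sums over $I\times F$ for full runs, so it would be harmless for the access phase. But the lemma is a claim about every entry of every node's matrix, and there the two versions differ.
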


\begin{algorithm}[t]
	\caption{Updating a tree $\bbT$ to include a mapping $x\mapsto s$.}\label{alg:upd}	
	\smallskip
	\begin{algorithmic}[1]
		\Procedure{{\sc Update}}{$\bbT, x\mapsto s, l, r $} 
		\If{$s < l$ {\bf or} $r < s$}
		\State {\bf let} $\bbT_i\langle l,r\rangle::M$
		\State {\bf return} $M$
		\ElsIf{$l = r$}
		\State $a\gets w[l]$
		\State $Z \gets \textsf{vars at }\bbT_i\langle l,r\rangle$
		\State $\bbT_i\langle l,r\rangle ::M \gets \Delta^{{Z \cup \{x\}}}_a(\cA)$
		\State $\textsf{vars at }\bbT_i\langle l,r\rangle \gets Z\cup \{x\}$
		\State {\bf return} $M$
		\Else 
		\State $m \gets {\sf mid}(l, r)$
		\State $M_1\gets \textsc{Update}(\bbT, x\mapsto s, l, m)$
		\State $M_2\gets \textsc{Update}(\bbT, x\mapsto s, m+1, r)$
		\State $\bbT_i\langle l, r\rangle  :: M \gets M_1 \cdot M_2$
		\State {\bf return} $M$
		\EndIf
		\EndProcedure
	\end{algorithmic}
\end{algorithm}

\subsection{Access phase}

We now describe the access phase that is actually used in the algorithm, by basing ourselves on the template of the access phase (Section~\ref{sec:template}). We will give a summarized explanation of the algorithm that doubles as a correctness proof for item (1) in Theorem~\ref{theo:words}.
\subparagraph{Explanation.} 
Let $\mu^*$ be the $t$-th mapping in $\sem{\cA}(w)$ and fix $s_1 := \mu^*(x_1),\ldots, s_k := \mu^*(x_k)$.
For $i\in[1,k]$ and $m\in[1,n]$, define $\tau_{x_i \leq m} :=\{x_1\mapsto s_1\, ,\, \ldots,x_{i-1}\mapsto s_{i-1}\, ,\, x_i\Mapsto[1,m]\}$.
Note how, in the template of the access phase, the required counting operations can be summarized as computing $\sharp\sem{\cA}(w)\langle\tau_{x_i \leq m} \rangle$ for some $i$ and $m$.

Assume that we are at iteration $i$, and we have computed the values $s_1,\ldots,s_{i-1}$, and also performed the updates on lines~\ref{alg:upd1} and \ref{alg:upd2} of Algorithm~\ref{alg:acc2}.
We note that, for any $m$, we can identify a $\tau_L$ inside $[1,m]$ and a $\tau_R$ inside $[m+1,n]$ such that $\tau_L\cdot\tau_R = \tau_{x_i \leq m}$; then, we define $M_L^* := \cM\langle 1, m : \tau_R\rangle$ and $M_R^* := \cM\langle m+1, r : \tau_L\rangle$ and we observe that $\sharp\sem{\cA}(w)\langle\tau_{x_i \leq m} \rangle = \sum_{p\in I , q \in F}(M_L^*\cdot M_R^*)$ by Lemma~\ref{lem:ms}.

Let us look at $\tau_L$ and $\tau_R$ more carefully. We see that $\tau_L = \{x_j\mapsto s_j\}_{s_j \in [1,m]}\cup\{x_j\Mapsto \emptyset\}_{s_j \in [m+1,n]}\cup\{x_i\Mapsto[1,m]\}$; and that $\tau_L = \{x_j\mapsto s_j\}_{s_j \in [m+1,n]}\cup\{x_j\Mapsto \emptyset\}_{s_j \in [1,m]}\cup\{x_i\Mapsto\emptyset\}$. By Lemma~\ref{lem:upd}, the updates that were performed ensure that we can find this $M_L$ by composing matrices from the current version of $\bbT_{i-1}$, and we can find  $M_R$ by composing matrices from the current version of $\bbT_{i}$. More precisely, $M_L^*$ can be obtained by composing the matrices stored at $\bbT_{i-1}\langle 1,m_1\rangle, \bbT_{i-1}\langle m_1+1,m_2\rangle,\ldots,\bbT_{i-1}\langle m_{\chi-1}+1,m_{\chi}\rangle, \bbT_{i-1}\langle m_{\chi}+1,m\rangle$ for any choice of $1 \leq m_1 <  \cdots  < m_{\chi}' < n$; and $M_R^*$ can be obtained by composing the matrices stored at $\bbT_{i}\langle m,m_1'\rangle, \bbT_{i}\langle m_1'+1,m_2'\rangle,\ldots,\bbT_{i}\langle m_{\chi-1}'+1,m_{\chi}'\rangle, \bbT_{i}\langle m_{\chi}'+1,n\rangle$ for any choice of $m < m_1' < \cdots < m_{\chi}' < n$.

The last point to explain is how to find these matrices effciently.\footnote{We note that, without doing anything further, the explanation so far does give us a logsquare access algorithm: since {\sf mid} is strongly balanced, we can compose any range by logarithmically many of the ranges realized by {\sf mid} and use these to compute $M_L^*$ and $M_R^*$.} 
We observe that $Algorithm~\ref{alg:acc2}$ defines two matrices $M_L^{\sf out}$ and $M_R^{\sf out}$ that work as caches of a prefix in $[1,m]$, and a suffix in $[m+1, n]$ respectively. With these, we only need to compose a constant number of matrices  (line~\ref{alg:comp}) to obtain $M_L^*$ and $M_R^*$ at each step in the binary search. The value that is removed from $t$ at the end of each iteration is also obtained by using the cached matrices (line~\ref{alg:diff}).

\begin{algorithm}[t]
	\caption{Access the $t$-th element in $\sem{\cA}(w)$.}\label{alg:acc2}	
	\smallskip
	\begin{algorithmic}[1]
		\hspace{-2.5em}
		\begin{varwidth}[t]{0.45\textwidth}
		\Procedure{{\sc Access}}{$\cA, \bbT_1, \ldots, \bbT_k, t$}
		\State {\bf declare} $s_1,\ldots,s_k$
		\For{$i \in [1,k]$}
		\State $(s_i, M^{\sf out}_L, M^{\sf out}_R)\gets\textsc{BinSearch}$
		\State {\bf let} $\bbT_{i}\langle s_i, s_i\rangle :: M^{\sf in}$
		\State $M \gets M^{\sf out}_L \cdot M^{\sf in} \cdot M^{\sf out}_R$
		\State ${\sf diff} \gets \sum_{p\in I, q\in F}M(p,q)$~\label{alg:diff}
		\State $t \gets t - {\sf diff} $
		\For{$j \in [i+1, n]$}\label{alg:upd1}
		\State $\textsc{Update}(\bbT_{j}, x_i\mapsto s_i, 1, n)$\label{alg:upd2}
		\EndFor
		\EndFor
		\State {\bf return} $\{x_1\mapsto s_1\, ,\, \ldots \, , \, x_k \mapsto s_k\}$
		\EndProcedure
		\end{varwidth}
		\hspace{2.5em}
		\begin{varwidth}[t]{0.55\textwidth}
		\Procedure{{\sc BinSearch}}{}
		\State $M^{\sf out}_L \gets I_Q\ ; \ M^{\sf out}_R \gets I_Q$
		\State $(l,r) \gets (1, n)$
		\While{$l < r$}
		\State $m \gets {\sf mid}(l, r)$
		\State {\bf let} $\bbT_{i-1}\langle l, m\rangle :: M^{\sf in}_L$ 
		\State {\bf let} $\bbT_{i}\langle m+1, r\rangle :: M^{\sf in}_R$
		\State $M \gets M^{\sf out}_L \cdot M^{\sf in}_L \cdot M^{\sf in}_R \cdot M^{\sf out}_R$\label{alg:comp}
		\If{$t \leq \sum_{p\in I, q\in F}M(p,q)$}
		\State $(l, r) \gets (l, m)$
		\State $M^{\sf out}_R \gets M^{\sf in}_R \cdot M^{\sf out}_R$
		\Else
		\State $(l, r) \gets (m+1, r)$
		\State $M^{\sf out}_L \gets M^{\sf out}_L \cdot M^{\sf in}_L$
		\EndIf
		\EndWhile
		\State {\bf return} $(l, M^{\sf out}_L, M^{\sf out}_R)$
		\EndProcedure
		\end{varwidth}
	\end{algorithmic}
\end{algorithm}

\subparagraph{Restoring the data structure}
To perform the access operation more than once, the original data structure must be preserved.
We observe that this can be maintained by storing the indices that were modified, which amount to at most $(k+1)\cdot c_{\sf mid}\cdot\log(n)$ matrices, and then restoring them. This concludes the description of the main algorithm.

\subparagraph{Complexity analysis}
The runtime of the access phase is dominated by the \textsc{Update} operation, which takes $O(|Q|^{\omega}\cdot|X|^2\cdot\log(|w|))$ time.

\subsection{If the order $\prec$ is given in the access phase}

We describe how to address case (2) in Theorem~\ref{theo:words}. The main difference is that instead of only the trees $\bbT_0,\ldots,\bbT_k$, we keep the full collection $\{\bbT_Y\}_{Y\subseteq X}$. Then, in the access phase, we identify the variables in the sequence $y_1 \prec \cdots \prec y_k$ and obtain the mapping $\mu^*$ using trees $\bbT_{\{y_1,\ldots,y_k\}}, \bbT_{\{y_2,\ldots,y_k\}}, \ldots, \bbT_{\{y_k\}}$. After identifying these trees, the procedure is fully analogous to case (1). This completes the proof of Theorem~\ref{theo:words}.

\section{Extension to SLPs}\label{sec:slps}

In this section, we extend the algorithm to receive a string that has been compressed by an SLP-compression scheme. The result is as follows:

\begin{theorem}\label{theo:slps}
	Let $\cA$ be an unambiguous vset automata on a set of variables $X$ and let $\frakS$ be an SLP that represents a string $w\in\Sigma^*$.
	\begin{enumerate}
	\item For a fixed order $\prec$ over $X$, we can have direct (ranked) access on the set $\sem{\cA}(w)$  in $O(|Q|^{\omega}\cdot|X|^2\cdot\log(|w|))$ time
	after $O(|Q|^{\omega}\cdot|X|\cdot|\frakS|)$ preprocessing.
	\item If $\prec$ is given as input in the access phase, the problem requires $O(|Q|^{\omega}\cdot 2^{|X|}\cdot|\frakS|)$ 
	preprocessing time. 
	\end{enumerate}
\end{theorem}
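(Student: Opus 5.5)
The plan is to reuse the machinery of Section~\ref{sec:words} unchanged and only replace the explicit trees $\bbT_i$ by \emph{compressed} trees: DAGs whose nodes are the non-terminals of $\frakS$. I would first bring $\frakS$ into Chomsky normal form, which increases its size only linearly. Then I would rebalance it so that, for every rule $A\to BC$, the heights of the derivation trees of $B$ and $C$ differ by at most one. This is the known AVL-grammar/balancing construction for SLPs, which runs in $O(|\frakS|)$ time and yields an SLP of size $O(|\frakS|)$ and depth $O(\log|w|)$. With this grammar, the derivation tree of $\str(\frakS)$ is a binary tree over the positions $[1,|w|]$, and it is strongly balanced in the sense of Section~\ref{sec:words}. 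I would define ${\sf mid}(l,r)$ so that the recursion of $\bbT$ reproduces exactly this derivation tree: if the node $\langle l,r\rangle$ corresponds to an occurrence of $A\to BC$, then ${\sf mid}(l,r)=l+|\str(B)|-1$. This is the reason the template was written for a generic strongly balanced {\sf mid}.

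Next I would observe that the matrix stored at $\bbT_i\langle l,r\rangle$ depends only on the non-terminal $A$ deriving $w[l,r]$ and not on the position $(l,r)$. By Lemma~\ref{lem:matrices} it equals $\cM\langle l,r:\tau_{>i}\rangle$, and since $\tau_{>i}$ is position-independent (variables $x_1,\ldots,x_i$ forbidden, all others free), this matrix is just a product of the matrices $\Delta^{\{x_{i+1},\ldots,x_k\}}_a$ over the letters of $\str(A)$. Preprocessing therefore computes, for each $i\in[0,k]$ and each non-terminal $A$, one matrix $M_i(A)$: for rules $A\to a$ set $M_i(A)=\Delta_a^{\{x_{i+1},\ldots,x_k\}}$, and for $A\to BC$ set $M_i(A)=M_i(B)\cdot M_i(C)$, in bottom-up topological order. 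I would also store $|\str(A)|$ for each $A$, so that {\sf mid} can be evaluated and a root-to-leaf navigation can be done by position. This takes $O(|Q|^{\omega}\cdot|X|\cdot|\frakS|)$ time for item (1), and $O(|Q|^{\omega}\cdot 2^{|X|}\cdot|\frakS|)$ time for item (2) if we keep all $Y\subseteq X$.

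The main obstacle is the access phase. {\sc Update} modifies nodes, but in the DAG a non-terminal is shared by many occurrences, so writing into it would corrupt other positions. I would handle this by path copying: {\sc Update}$(\bbT_j,x\mapsto s)$ only touches the $O(\log|w|)$ nodes on the root-to-leaf path towards $s$. For each such node I would create a fresh copy, placed in a temporary overlay that is keyed by the pair (depth, path) or equivalently by $(l,r)$. The copy stores the recomputed matrix and points to the unchanged sibling non-terminal. Lemma~\ref{lem:upd} still holds for these copies, because its proof only uses the recursive product structure along valid indices. {\sc BinSearch} likewise only descends along one path, reading the left child's matrix from $\bbT_{i-1}$ and the right child's matrix from $\bbT_i$; either matrix is taken from the overlay if that node was updated and from the shared DAG otherwise. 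Since each access creates at most $O(|X|^2\log|w|)$ overlay nodes in total, and these are discarded afterwards (the analogue of ``restoring the data structure''), the access time is $O(|Q|^{\omega}\cdot|X|^2\cdot\log|w|)$, as in Theorem~\ref{theo:words}. Correctness then follows verbatim from Lemmas~\ref{lem:alg1}, \ref{lem:ms} and~\ref{lem:upd}, with arithmetic costs absorbed into $\kappa$.
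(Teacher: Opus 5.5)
Your construction otherwise follows the paper's proof. You keep one matrix per non-terminal and per $i\in[0,k]$ (resp.\ per $Y\subseteq X$), computed bottom-up. {\sc BinSearch} descends the derivation tree with ${\sf mid}(l,r)=l+|\str(B)|-1$. {\sc Update} copies the non-terminals on the root-to-leaf path, and the copies are discarded after the access; your overlay keyed by $(l,r)$ is only a bookkeeping variant of the paper's fresh duplicated non-terminals.

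\textbf{The gap is your rebalancing step.} You claim a linear-time, size-$O(|\frakS|)$ transformation of an arbitrary SLP into one where, for every rule $A\to BC$, the heights of $B$ and $C$ differ by at most one. That claim is false in general. Converting an SLP into such an AVL-type grammar can require an $\Omega(\log|w|)$ blow-up in size. The paper says so explicitly in Section~\ref{sec:upd}, citing \cite{ganardicompression}. It also stresses, right after stating the balancing theorem in Section~\ref{sec:slps}, that this theorem does not give strong balance. With a grammar of size $\Theta(|\frakS|\cdot\log|w|)$, your preprocessing would cost $O(|Q|^{\omega}\cdot|X|\cdot|\frakS|\cdot\log|w|)$ (resp.\ with $2^{|X|}$ in place of $|X|$). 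That misses the bounds the theorem asserts.

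\textbf{The repair: strong balance is never needed for this theorem.} Lemmas~\ref{lem:alg1}, \ref{lem:ms}, \ref{lem:matrices} and~\ref{lem:upd}, and the caching in {\sc BinSearch}, only use $l\le{\sf mid}(l,r)<r$. The cost of {\sc BinSearch} and of each {\sc Update} depends only on the height of the derivation tree. So invoke the balancing theorem of \cite{slpbalancing} instead (linear time, size $O(|\frakS|)$, depth $O(\log|w|)$), exactly as the paper does. Strong balance matters only for the edit operations of Section~\ref{sec:upd}. With that substitution, the rest of your argument goes through and coincides with the paper's proof.
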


The rest of the section is dedicated to proving this theorem. As such, we fix $\cA$ and $\frakS$, where $\str(\frakS) = w$ and $|w| = n$. 
We forewarn that the algorithm will be largely analogous to the string case.

We start by highlighting a crucial balancing result for SLPs. By virtue of it, we assume that the input SLP $\frakS$ has depth $O(\log n)$ for the rest of the section.

\begin{theorem}[\cite{slpbalancing}]
Given an SLP $\frakS$ that represents a string of length $n$, one can construct in linear time an equivalent SLP of size $O(|\frakS|)$ and depth $O(\log n)$.
\end{theorem}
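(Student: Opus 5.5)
The plan is the symmetric centroid decomposition argument of Ganardi, Jeż and Lohrey. First, bring $\frakS$ into Chomsky normal form. This is a linear-time transformation that at most multiplies the size by a constant. It does not affect the length $n$, but it may increase the depth, which is harmless because the final step rebuilds the depth anyway. View the SLP as a DAG with root $S_0$. For each non-terminal $A$, assign two weights: $\lambda(A) := |\str(A)|$, and $\pi(A)$, the number of root-to-$A$ paths in the DAG. Both are computable in linear time by a bottom-up pass and a top-down pass. The key inequality is $\pi(A)\cdot\lambda(A)\le n$, since distinct root-to-$A$ paths cover disjoint occurrences of $\str(A)$ in $w$. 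Hence both $\log\pi$ and $\log\lambda$ range over $[0,\log n]$.

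Next, call an edge $(A,B)$ \emph{heavy} if $\lfloor\log\pi(A)\rfloor=\lfloor\log\pi(B)\rfloor$ and $\lfloor\log\lambda(A)\rfloor=\lfloor\log\lambda(B)\rfloor$; all other edges are \emph{light}. I would show two facts.
\begin{enumerate}
\item Every node has at most one outgoing and at most one incoming heavy edge. For outgoing edges, the two children split $\lambda(A)$, so only one child can keep its $\lambda$ in the same binary class. For incoming edges, $\pi(B)$ is the sum of $\pi$ over its parent edges, so only one parent can stay in the same $\pi$-class.

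Consequently, the heavy edges partition the DAG into vertex-disjoint paths.
\item Along any root-to-leaf path, $\lfloor\log\pi\rfloor$ is non-decreasing and $\lfloor\log\lambda\rfloor$ is non-increasing, and every light edge strictly changes one of them. Hence any root-to-leaf path crosses only $O(\log n)$ light edges.
\end{enumerate}

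Then, rebuild each heavy path $v_1\to\cdots\to v_k$ separately. Each $v_i$ has one off-path child, hanging either to the left or to the right, so $\str(v_1)=L_1\cdots L_p\,\str(v_k)\,R_q\cdots R_1$, where the $L_j$ and $R_j$ are strings of off-path non-terminals. Replace the left list and the right list by weight-balanced (biased) concatenation trees, with weight $\lambda$ on each hanging symbol. In such a tree the symbol $L_j$ sits at depth $O\bigl(1+\log(\lambda(v_1)/\lambda(L_j))\bigr)$. Combined with fact 2, these depths telescope over the $O(\log n)$ path changes along any root-to-leaf path, giving total depth $O(\log n)$.

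The step I expect to be the main obstacle is preserving the \emph{internal} nodes $v_i$ of a heavy path. Other heavy paths may point to $v_i$ through light edges, so each $v_i$ must remain expressible as a non-terminal. Naively, writing each $v_i$ as a suffix of the left list and a prefix of the right list costs $O(\log)$ fresh rules per node, which breaks linear size. I would first try the GJL trick: weight the hanging symbols in the biased tree by a combination of $\lambda$ and the $\pi$-mass entering at each $v_i$. Then each $v_i$ is obtained from $O(1)$ fresh rules on top of nodes already present in the biased tree (a ``fringe'' of the tree). The depth of $v_i$ would then be $O(\log(\lambda(v_1)\pi(v_i)/(\lambda(v_i)\pi(v_1))))$-like, and the telescoping argument still applies. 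Summing $O(k)$ rules per path of length $k$ gives size $O(|\frakS|)$. All steps consist of linear passes over the DAG plus linear-time construction of the biased trees, which gives linear time overall.
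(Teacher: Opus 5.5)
You have the right framework, but the step you yourself flag is missing. The paper gives no proof of this statement; it imports it from \cite{slpbalancing}, and your outline follows that source's symmetric centroid decomposition. The parts you actually argue are correct: $\pi(A)\lambda(A)\le n$, at most one heavy edge into and out of each node, $O(\log n)$ light edges on every root-to-leaf path, and the biased-tree depth bound when a heavy path is entered at its top $v_1$.

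\textbf{The gap.} At the internal nodes of a heavy path you only gesture at a fix, and the bound you attach to it cannot be right. Since $v_1$ and $v_i$ lie on the same heavy path, $\lambda(v_1)<2\lambda(v_i)$ and $\pi(v_i)<2\pi(v_1)$. So your quantity $\log\bigl(\lambda(v_1)\pi(v_i)/(\lambda(v_i)\pi(v_1))\bigr)$ is $O(1)$. That would make entering a heavy path at an internal node essentially free, and nothing in your sketch achieves this in linear size. The natural ``fringe'' construction works as follows. For each node $y$ of the biased tree over $L_1\cdots L_p$, let $\mathrm{Out}(y)$ derive everything to the right of $y$'s subtree: $\mathrm{Out}(y)\to \mathrm{sib}(y)\,\mathrm{Out}(\mathrm{par}(y))$ if $y$ is a left child with right sibling $\mathrm{sib}(y)$, and $\mathrm{Out}(y):=\mathrm{Out}(\mathrm{par}(y))$ otherwise. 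Then every suffix $L_a\cdots L_p=L_a\,\mathrm{Out}(L_a)$ costs $O(1)$ fresh rules, which gives linear size. However, $L_j$ then sits at depth $O(\mathrm{depth}(L_a)+\mathrm{depth}(L_j))$ below that suffix, not $O(\mathrm{depth}(L_j))$. The entry cost $\mathrm{depth}(L_a)$ is not paid for by your $\lambda$-telescoping: take a very light $L_a$ followed by heavy symbols. Your argument does not account for it.

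\textbf{What closes it.} This is what your ``combination of $\lambda$ and $\pi$-mass'' has to achieve: pay for entries with the $\pi$-mass arriving through light edges, not with $\pi(v_i)$. Let $\eta(v_1):=\pi(v_1)$ and $\eta(v_i):=\pi(v_i)-\pi(v_{i-1})$ for $i>1$, i.e., the total $\pi$ over light edges entering $v_i$; these sum to $\pi(v_k)$. Give $L_j$ the weight $\lambda(L_j)/\lambda(v_1)+\sum_i\eta(v_i)/\pi(v_k)$, the sum ranging over those $i$ whose left part starts at $L_j$ (and symmetrically on the right). The total weight is at most $2$. Hence a segment entering at $v_i$ and leaving at a hanging child $x$ has depth $O\bigl(1+\log(\pi(v_k)/\eta(v_i))+\log(\lambda(v_1)/\lambda(x))\bigr)$.

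The telescoping must then run in two directions. The $\lambda$-terms telescope top-down, as you describe. The $\pi$-terms telescope bottom-up. Suppose the segment is entered at $e_t$ via a light edge $(z,e_t)$, with $z$ at or below $e_{t-1}$ on the previous heavy path. Then $\eta(e_t)\ge\pi(z)\ge\pi(e_{t-1})$ and $\pi(v_k)<2\pi(e_t)$, so these terms sum to $O(\text{number of segments})+\log n$. With this weighting and two-sided accounting spelled out, plus a linear-time construction of the weight-balanced trees, your outline becomes a proof.
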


We point out that this result does not imply that the resulting SLP is strongly balanced. The requirement of a function {\sf mid} that is strongly balanced is not necessary for the results in Section~\ref{sec:words}; it will only be important in the next section, where we deal with edits in the data.

Next, we observe that we can precompute the values of $|\str(A)|$ for $A \in N$ by a linear-time dynamic-programming approach on the DAG structure of the SLP. This idea is standard in SLP literature~\cite{slpedits, slpenum1, slpenum2}.

\subsection{Preprocessing}

The preprocessing phase is presented in Algorithm~\ref{alg:slpprep}. We switch from trees $\bbT_{Y}$ to graph structures $\bbD_{Y}$ whose structure is dictated by the SLP $\frakS$ itself. 
The indexing is no longer given by ranges but rather by nonterminals from $\frakS$. 
The information stored at each index is still a matrix in $\bbN^{Q\times Q}$ and a set of variables in $X$; yet the indices themselves change from ranges to nonterminals in $N$.
The idea of the preprocessing is mostly unchanged,
and the way they work can be directly borrowed from the strings case: every $\bbD_{Y}$ can be ``unrolled'' into a tree $\bbT_{Y}$, so the values we obtain from $\bbD_{Y}$ satisfy the necessary conditions for correctness.

\begin{algorithm}[t]
	\caption{Preprocessing phase for direct access on $\sem{\cA}(\str(\frakS))$.}\label{alg:slpprep}	
	\smallskip
	\begin{algorithmic}[1]
		\hspace{-2.5em}
		\begin{varwidth}[t]{0.5\textwidth}
		\Procedure{{\sc Preprocessing}}{$\cA, \frakS$} 
		\State {\bf declare} $\bbD_1,\ldots,\bbD_k$
		\For{$i \in [0,k]$}
		\State $\textsc{Build}(\bbD_i, S_0)$
		\EndFor
		\EndProcedure
		\end{varwidth}
		\hspace{2.5em}
		\begin{varwidth}[t]{0.5\textwidth}
		\Procedure{{\sc Build}}{$\bbD_i, A$} 
		\If{$R(A) = a$}
		\State $\bbD_i\langle A\rangle::M \gets \Delta^{\{x_{i+1}\, ,\,\ldots\, ,\,x_k\}}_a$
		\State {\bf return} $M$
		\Else 
		\State $BC\gets R(A)$
		\State $M_1\gets \textsc{Build}(\bbT_i, B)$
		\State $M_2\gets \textsc{Build}(\bbT_i, C)$
		\State $\bbD_i\langle A\rangle :: M \gets M_1 \cdot M_2$
		\State {\bf return} $M$
		\EndIf
		\EndProcedure
		\end{varwidth}
	\end{algorithmic}
\end{algorithm}

\subsection{Access phase}

The access phase is described in Algorithm~\ref{alg:accslps}.
Again, the idea is largely unchanged, and it is straightforward to see why the first iteration retrieves the correct value for $s_1$ with an argument analogous to the case for strings.

However, the \textsc{Update} operation operates differently: the matrices are no longer associated to a range, hence modifying a matrix at a leaf and propagating this value upwards could affect other indices indiscriminately.
For our \textsc{Update} operation, we modify the structures $\bbD_{i},\ldots,\bbD_{k}$ borrowing ideas from the existing literature on SLP updates~\cite{slpedits,compressedtreesenum}.

\newcommand{\waeuat}{0.65}

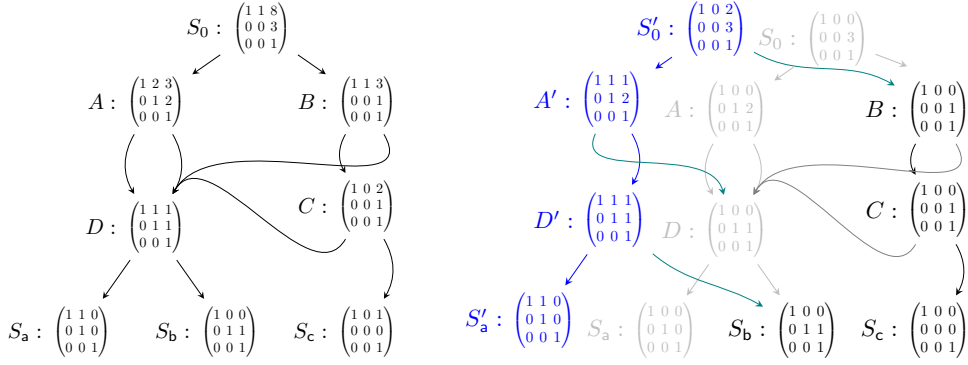
\begin{figure}[t]
	\centering
	\resizebox{5.9cm}{!}{
				\begin{tikzpicture}[scale=1.7,->,>=stealth,]

		\setlength\arraycolsep{2pt}
		\renewcommand\arraystretch{0.9}

		\node (nS0) at (0,2.7) {
\scalebox{\waeuat}{
$\begin{pmatrix}
	1 & 1 & 8\\
	0 & 0 & 3\\
	0 & 0 & 1\\
\end{pmatrix}$}};
		\node (nA) at (-1,2) {
\scalebox{\waeuat}{
$\begin{pmatrix}
	1 & 2 & 3\\
	0 & 1 & 2\\
	0 & 0 & 1\\
\end{pmatrix}$}};
		\node (nB) at (1,2) {
\scalebox{\waeuat}{
$\begin{pmatrix}
	1 & 1 & 3\\
	0 & 0 & 1\\
	0 & 0 & 1\\
\end{pmatrix}$}};
		\node (nC) at (1,1) {
\scalebox{\waeuat}{
$\begin{pmatrix}
	1 & 0 & 2\\
	0 & 0 & 1\\
	0 & 0 & 1\\
\end{pmatrix}$}};
		\node (nD) at (-1,0.8) {
\scalebox{\waeuat}{
$\begin{pmatrix}
	1 & 1 & 1\\
	0 & 1 & 1\\
	0 & 0 & 1\\
\end{pmatrix}$}};
		\node (Sa) at (-1.7,-0.2) {
\scalebox{\waeuat}{
$\begin{pmatrix}
	1 & 1 & 0\\
	0 & 1 & 0\\
	0 & 0 & 1\\
\end{pmatrix}$}};
		\node (Sb) at (-0.3,-0.2) {
\scalebox{\waeuat}{
$\begin{pmatrix}
	1 & 0 & 0\\
	0 & 1 & 1\\
	0 & 0 & 1\\
\end{pmatrix}$}};
		\node (Sc) at (1,-0.2) {
\scalebox{\waeuat}{
$\begin{pmatrix}
	1 & 0 & 1\\
	0 & 0 & 0\\
	0 & 0 & 1\\
\end{pmatrix}$}};

\node[left = -0.15cm of Sc] (Scl) {$S_{\sf c}:$};
\node[left = -0.15cm of Sb] (Sbl) {$S_{\sf b}:$};
\node[left = -0.15cm of Sa] (Sal) {$S_{\sf a}:$};
\node[left = -0.15cm of nD] (nDl) {$D:$};
\node[left = -0.15cm of nC] (nCl) {$C:$};
\node[left = -0.15cm of nB] (nBl) {$B:$};
\node[left = -0.15cm of nA] (nAl) {$A:$};
\node[left = -0.15cm of nS0] (nS0l) {$S_0:$};
		
		\draw (nS0) to (nA);
		\draw (nS0) to (nB);
		\draw (nA) to[bend right] (nD);
		\draw (nA) to[bend left] (nD);
		\draw (nB) to[bend right] (nC);
		\draw (nB) to[out=-60, in=60] (nD);
		\draw (nC) to[out=-120, in=60] (nD);
		\draw (nC) to[bend left] (Sc);
		\draw (nD) to (Sa);
		\draw (nD) to (Sb);
	\end{tikzpicture}
		}
	\resizebox{7.4cm}{!}{
				\begin{tikzpicture}[scale=1.5,->,>=stealth,inner sep=0.1cm]

		\setlength\arraycolsep{2pt}
		\renewcommand\arraystretch{0.9}

		\node[rectangle] (nS0) at (0, 2.7) {
		
\scalebox{\waeuat}{
$\color{lightgray}{\begin{pmatrix}
	1 & 0 & 0\\
	0 & 0 & 3\\
	0 & 0 & 1\\
\end{pmatrix}}$}};
		\draw[color=lightgray] (nS0) to (nB);
		
		\node[rectangle] (nS0p) at (-1.2, 2.8) {
		
\scalebox{\waeuat}{
$\color{blue}{\begin{pmatrix}
	1 & 0 & 2\\
	0 & 0 & 3\\
	0 & 0 & 1\\
\end{pmatrix}}$}};
		\node[rectangle] (nA) at (-1, 2) {
		
\scalebox{\waeuat}{
$\color{lightgray}{\begin{pmatrix}
	1 & 0 & 0\\
	0 & 1 & 2\\
	0 & 0 & 1\\
\end{pmatrix}}$}};
		\node[rectangle] (nAp) at (-2.2, 2.1) {
	
\scalebox{\waeuat}{	
$\color{blue}{
\begin{pmatrix}
	1 & 1 & 1\\
	0 & 1 & 2\\
	0 & 0 & 1\\
\end{pmatrix}
}$
}
};
		\node[rectangle] (nB) at (1,2) {
\scalebox{\waeuat}{
$\begin{pmatrix}
	1 & 0 & 0\\
	0 & 0 & 1\\
	0 & 0 & 1\\
\end{pmatrix}$}};
		\node[rectangle] (nC) at (1,1) {
\scalebox{\waeuat}{
$\begin{pmatrix}
	1 & 0 & 0\\
	0 & 0 & 1\\
	0 & 0 & 1\\
\end{pmatrix}$}};
		\node[rectangle] (nD) at (-1,0.8) {

\scalebox{\waeuat}{		
$\color{lightgray}{\begin{pmatrix}
	1 & 0 & 0\\
	0 & 1 & 1\\
	0 & 0 & 1\\
\end{pmatrix}}$}};
		\node[rectangle] (nDp) at (-2.2, 0.9) {

\scalebox{\waeuat}{		
$\color{blue}{\begin{pmatrix}
	1 & 1 & 1\\
	0 & 1 & 1\\
	0 & 0 & 1\\
\end{pmatrix}}$}};
		\node[rectangle] (Sa) at (-1.7,-0.2) {

\scalebox{\waeuat}{		
$\color{lightgray}{\begin{pmatrix}
	1 & 0 & 0\\
	0 & 1 & 0\\
	0 & 0 & 1\\
\end{pmatrix}}$}};
		\node[rectangle] (Sap) at (-2.9,-0.1) {

\scalebox{\waeuat}{		
$\color{blue}{\begin{pmatrix}
	1 & 1 & 0\\
	0 & 1 & 0\\
	0 & 0 & 1\\
\end{pmatrix}}$}};
		\node[rectangle] (Sb) at (-0.3,-0.2) {
\scalebox{\waeuat}{
$\begin{pmatrix}
	1 & 0 & 0\\
	0 & 1 & 1\\
	0 & 0 & 1\\
\end{pmatrix}$}};
		\node[rectangle] (Sc) at (1,-0.2) {
\scalebox{\waeuat}{
$\begin{pmatrix}
	1 & 0 & 0\\
	0 & 0 & 0\\
	0 & 0 & 1\\
\end{pmatrix}$}};

\node[left = -0.15cm of Sc] (Scl) {$S_{\sf c}:$};
\node[left = -0.15cm of Sb] (Sbl) {$S_{\sf b}:$};
\node[left = -0.15cm of Sa] (Sal) {\color{lightgray}{$S_{\sf a}:$}};
\node[left = -0.15cm of Sap] (Sapl) {\color{blue}{$S'_{\sf a}:$}};
\node[left = -0.15cm of nD] (nDl) {\color{lightgray}{$D:$}};
\node[left = -0.15cm of nDp] (nDpl) {\color{blue}{$D':$}};
\node[left = -0.15cm of nC] (nCl) {$C:$};
\node[left = -0.15cm of nB] (nBl) {$B:$};
\node[left = -0.15cm of nA] (nAl) {\color{lightgray}{$A:$}};
\node[left = -0.15cm of nAp] (nApl) {\color{blue}{$A':$}};
\node[left = -0.15cm of nS0] (nS0l) {\color{lightgray}{$S_0:$}};
\node[left = -0.15cm of nS0p] (nS0pl) {\color{blue}{$S'_0:$}};
		
		\draw[color=lightgray] (nS0) to (nA);
		\draw[color=blue] (nS0p) to (nAp);
		\draw[color=teal] (nS0p) to[out=-30, in=150] (nB);
		\draw[color=lightgray]  (nA) to[bend right] (nD);
		\draw[color=lightgray]  (nA) to[bend left] (nD);
		\draw[color=blue]  (nAp) to[bend left] (nDp);
		\draw[color=teal]  (nAp) to[out=-115, in=105] (nD);
		\draw (nB) to[bend right] (nC);
		\draw[color=gray] (nB) to[out=-60, in=60] (nD);
		\draw[color=gray] (nC) to[out=-120, in=60] (nD);
		\draw (nC) to[bend left] (Sc);
		\draw[color=lightgray] (nD) to (Sa);
		\draw[color=blue] (nDp) to (Sap);
		\draw[color=lightgray]  (nD) to (Sb);
		\draw[color=teal] (nDp) to[out=-40, in=150] (Sb);
	\end{tikzpicture}
		}
	\caption{(left) The matrices as set in $\bbD_0$\label{fig:slp1}; (right) in gray and black, the matrices as set in $\bbD_1$, and in blue, the new matrices after performing $\textsc{Update}(\bbD_1, x\mapsto 3)$.}
	\label{fig-re}\label{fig:slp2}
\end{figure}

We will sketch the idea and provide an illustrated example. For \textsc{Update}$(\bbD, x\mapsto s)$, we identify the path of nonterminals $S_0,A_1,\ldots,A_m$ that leads to the $s$-th element in $w$. Then, we duplicate these nonterminals and connect them to the rest of the DAG so that the path to access the $s$-th element is this sequence of fresh nonterminals. In particular, $S_0$ is swapped by its duplicate. 
Then, we perform the update on $\bbD\langle A_m \rangle$ as we did in the string case, and propagate the information upwards. 
The resulting $\bbD$ contains the proper information to compute the values $s_2,\ldots,s_k$. 
The illustrated example is presented in Figure~\ref{fig:slp2}.

We conclude the section by noting that
the idea for item (2) in Theorem~\ref{theo:slps} is fully analogous to the case for strings.
Namely, when the order $\prec$ is given in the access phase we can use structures $\{\bbD_Y\}_{Y\subseteq X}$ instead.

\subparagraph{Complexity analysis}
For the preprocessing phase, it can be seen that Algorithm~\ref{alg:slpprep} takes $O(|Q|^{\omega}\cdot {|X|}^2\cdot|\frakS|)$ time.
For the access phase, we note that the binary search takes as many steps as the depth of $\frakS$, 
which is guaranteed to be in $O(\log n)$ due to balancedness; thus the access phase takes $O(|Q|^{\omega}\cdot|X|^2\cdot\log(|w|))$ time. This concludes the proof of Theorem~\ref{theo:slps}.

\begin{algorithm}[t]
	\caption{Access the $t$-th element in $\sem{\cA}(\str(\frakS))$.}\label{alg:accslps}	
	\smallskip
	\begin{algorithmic}[1]
	\hspace{-2.5em}
	\begin{varwidth}[t]{0.5\textwidth}
		\Procedure{{\sc Access}}{$\cA, \bbD_1, \ldots, \bbD_k, t$}
		\State $s_1,\ldots,s_k$
		\For{$i \in [1,k]$}
		\State $(s_i, a, M^{\sf out}_L, M^{\sf out}_R)\gets\textsc{BinSearch}$
		\State {\bf let} $\bbD_{i}\langle A\rangle :: M^{\sf in}$
		\State $M \gets M^{\sf out}_L \cdot M^{\sf in} \cdot M^{\sf out}_R$
		\State ${\sf diff} \gets \sum_{p\in I, q\in F}M(p,q)$
		\State $t \gets t - {\sf diff}$
		\For{$j \in [i, n]$}
		\State $\textsc{Update}(\bbD_{j}, x_i\mapsto s_i, S_0)$
		\EndFor
		\EndFor
		\State {\bf return} $\{x_1\mapsto s_1\, ,\, \ldots \, , \, x_k \mapsto s_k\}$
		\EndProcedure
	\end{varwidth}	
	\hspace{2.5em}
	\begin{varwidth}[t]{0.5\textwidth}
		\Procedure{{\sc BinSearch}}{}
		\State $M^{\sf out}_L \gets I_Q\ ; \ M^{\sf out}_R \gets I_Q$
		\State $(l,r) \gets (1, n)\ ; \ A\gets S_0$
		\While{$l < r$}
		\State $BC \gets R(A)$
		\State $m \gets l + |B| - 1$
		\State {\bf let} $\bbD_{i-1}\langle B\rangle :: M^{\sf in}_L$ 
		\State {\bf let} $\bbD_{i}\langle C\rangle :: M^{\sf in}_R$
		\State $M \gets M^{\sf out}_L \cdot M^{\sf in}_L \cdot M^{\sf in}_R \cdot M^{\sf out}_R$
		\If{$t \leq \sum_{p\in I, q\in F}M(p,q)$}
		\State $(l, r) \gets (l, m)$
		\State $M^{\sf out}_R \gets M^{\sf in}_R \cdot M^{\sf out}_R$
		\Else
		\State $(l, r) \gets (m+1, r)$
		\State $M^{\sf out}_L \gets M^{\sf out}_L \cdot M^{\sf in}_L$
		\EndIf
		\EndWhile
		\State {\bf return} $(l, R(A), M^{\sf out}_L, M^{\sf out}_R)$
		\EndProcedure
	\end{varwidth}
	\end{algorithmic}
\end{algorithm}

\section{Handling updates}\label{sec:upd}

In this section, we explain how the results of complex editing operations presented in~\cite{diraccmso} still hold, 
while preserving the logtime access time.
Instead of the framework proposed in~\cite{diraccmso}, we will present this as an adaptation of the {\em complex string editing}\footnote{In~\cite{slpedits} the term used for {\em strings} is {\em documents}. We do not pay attention to this change any further than this footnote. } framework of~\cite{slpedits}, given that we can borrow the SLP edit operations with little change. 
Regrettably, we lack the {\sf split} and {\sf cut} operations that featured in~\cite{diraccmso}; we claim that they are also computable, and we plan to include them in the full version of the current paper.

We will proceed in a similar fashion as~\cite{slpenum2}, where the authors presented the editing framework, and then briefly explained how the machinery in~\cite{slpedits} was directly applicable to their data structure.

As a disclaimer, we note that the results in this section only hold for {\em strongly balanced} SLPs. In general, an SLP cannot be balanced with the strongness guarantee without possibly incurring in a $\log n$ factor blowup in its size~\cite{ganardicompression}.
However, we note that the results hold for strings with no further assumptions: we can represent a word by a strongly balanced SLP by using the function defined in Section~\ref{sec:words}.

\subparagraph{Complex string editing}
A string database over $\Sigma$ is a finite collection of strings over $\Sigma$ where each string has a name. Formally, a \emph{string database} is a function $D: \{d_1, \ldots, d_m\} \rightarrow \Sigma^*$ where $\{d_1, \ldots, d_m\}$ is a finite set of \emph{string names} and $D(d_i)$ is the string assigned to the name $d_i$. We also say that $\{d_1, \ldots, d_m\}$ is the \emph{schema} of the string database $D$. We define the size of $D$ as $|D| = \sum_{i=1}^m |D(d_i)|$.

Given a string database $D$ we can create new strings by a sequence of text-editing operations.
From~\cite{slpedits, slpenum2}, we borrow the notion of a \emph{Complex String Editing expression} (CSE-expression for short) over string names $\{d_1, \ldots, d_m\}$, defined by the following syntax 
\newcommand{\asdhush}{-0.7pt}
\[
	\psi  :=   d_\ell, \ell\in[1, m]   \hspace{\asdhush}\mid\hspace{\asdhush}   {\sf concat}(\psi, \psi)    \hspace{\asdhush}\mid\hspace{\asdhush}    {\sf extrac}(\psi, l, r)  \hspace{\asdhush}\mid\hspace{\asdhush} 
	  {\sf delete}(\psi, l, r)    \hspace{\asdhush}\mid\hspace{\asdhush}  {\sf insertop}(\psi, \psi, k)    \hspace{\asdhush}\mid\hspace{\asdhush}   {\sf copyop}(\psi, l, r, k)
\]
where the values $l,r,k$ are {\em valid}. %
The number of string names and operations in $\psi$ is noted $|\psi|$.
The semantics of an CSE-expression $\psi$ is given by a function $\sem{\psi}$ from string databases over $\{d_1, \ldots, d_m\}$ to strings in $\Sigma^*$, and works recursively as follows. 
\[
\renewcommand{\arraystretch}{1.4}
\begin{array}{rcl}
	\sem{d_\ell}(D) & = & D(d_\ell) \\
	\sem{{\sf concat}(\psi_1, \psi_2)}(D) & = & \sem{\psi_1}(D) \cdot \sem{\psi_2}(D)  \\
	\sem{{\sf extract}(\psi, l, r)}(D) & = & \sem{\psi}(D)[l,r] \\
	\sem{{\sf delete}(\psi, l, r)}(D) & = & \sem{\psi}(D)[1,l-1] \cdot \sem{\psi}(D)[r+1\ldots] \\
	\sem{{\sf insertop}(\psi_1, \psi_2, k)}(D) &=& \sem{\psi_1}(D)[1,k-1]\cdot \sem{\psi_2}(D) \cdot \sem{\psi_1}(D)[k\ldots] \\
	\sem{{\sf copyop}(\psi, l, r, k)}(D) &=& \sem{\psi}(D)[1,k-1]\cdot \sem{\psi}(D)[l,r] \cdot \sem{\psi}(D)[k\ldots] 
\end{array}
\]
where $\psi$, $\psi_1$, and $\psi_2$ are CSE-expressions, and $l, r, k$ are valid indices in their respective strings; and $u[k\ldots]$ denotes the suffix of $u$ that starts at position $k$. If $l$, $r$, or $k$ are not valid, then $\sem{\psi}(D)$ is undefined---in the algorithm, we assume they are always valid since verifying this takes linear time in $|D|$ and $|\psi|$.
We define the {\it maximum intermediate string size} $|\max_{\psi}(D)|$ induced by a CSE-expression $\psi$ on a string database $D$ as the maximum size of $\sem{\psi}(D)$ for any sub-expression $\psi$ of $\psi$.

The link between string databases and SLPs is given by a {\em rooting function} that maps schemas and nonterminals in an SLP. Formally, a {\em rooting function} from $\{d_1,\ldots,d_m\}$ to $\frakS$ is a function $\mathfrak{r}:\{d_1,\ldots,d_m\} \to N$. They define the string database $D_{\frakS, \mathfrak{r}}$ given by $D_{\frakS, \mathfrak{r}}(d) =  \str(\mathfrak{r}(d))$.

To state the main result in this section, we introduce the idea of an {\em editable direct access structure} for a string database $D$ over $\{d_1,\ldots,d_m\}$ and a vset automata $\cA$. 
We say that $\bbD$ is a {\em editable direct access structure} for $D$ and $\cA$ if it allows direct (ranked) access to the set $\sem{\cA}(\sem{\psi}(D))$ for any CSE-expression $\psi$ over $\{d_1,\ldots,d_m\}$. 
Futher, we say that $\bbD$ allows {\em $f(\psi, \cdot)$ editing} and {\em $g(\psi, \cdot)$ direct access} if there is an $f(\psi, \cdot)$-time procedure that receives $\psi$ and modifies $\bbD$ into an editable direct access structure $\bbD_{\psi}$ that allows direct access to $\sem{\cA}(\sem{\psi}(D))$ in $g(\psi, \cdot)$ time.

\begin{theorem}
	Let $\cA$ be an unambiguous vset automata with state set $Q$ and set of variables $X$, let $\frakS$ be a strongly balanced SLP and let $\mathfrak{r}$ be a rooting function to $\frakS$. We can produce, in $O(|Q|^{\omega}\cdot|X|\cdot|\frakS|)$ time, an editable direct access structure $\bbD$ for $D_{\frakS, \mathfrak{r}}$ and $\cA$ that allows $O(|\psi|\cdot|Q|^{\omega}\cdot|X|\cdot \log |\max_{\psi}(D)|)$ editing and $O(|Q|^{\omega}\cdot|X|^2\cdot |\max_{\psi}(D)|)$ direct access.
\end{theorem}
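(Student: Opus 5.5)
The plan is to reduce everything to the SLP machinery of Section~\ref{sec:slps} together with the edit machinery of~\cite{slpedits}. For preprocessing, I run Algorithm~\ref{alg:slpprep} on the strongly balanced SLP $\frakS$, building the structures $\bbD_0,\ldots,\bbD_k$ and storing at each nonterminal $A$ the matrices $\bbD_i\langle A\rangle::M$ and the length $|\str(A)|$. By the complexity analysis of Theorem~\ref{theo:slps}(1), this costs $O(|Q|^{\omega}\cdot|X|\cdot|\frakS|)$. The rooting function $\mathfrak{r}$ simply designates entry nonterminals. The key invariant I will maintain is that every nonterminal $B$ in the current (possibly enlarged) DAG carries, in each $\bbD_i$, the matrix $\cM\langle 1,|\str(B)| : \tau_{>i}\rangle$ computed over $\str(B)$. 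By Lemma~\ref{lem:ms} and Lemma~\ref{lem:matrices} (unrolled to trees), this holds whenever the matrix of $B$ is the product of its children's matrices and leaves hold $\Delta^{Y}_a$.

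For editing, I invoke the result of~\cite{slpedits}. Given a strongly balanced SLP and a CSE-expression $\psi$, it produces a strongly balanced (AVL-grammar) SLP with a new nonterminal $S_\psi$ such that $\str(S_\psi)=\sem{\psi}(D)$. The construction only adds $O(|\psi|\cdot\log|\max_\psi(D)|)$ fresh nonterminals and never modifies existing ones; each operation is built from concat/split steps along root-to-leaf paths with AVL rotations. The key observation is that each fresh nonterminal is created with a rule $A\to BC$ whose children already exist. I can therefore compute its $k+1$ matrices as $\bbD_i\langle B\rangle::M\cdot\bbD_i\langle C\rangle::M$, and its length as a sum, at creation time. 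This costs $O(|Q|^{\omega}\cdot|X|)$ per nonterminal, giving the claimed $O(|\psi|\cdot|Q|^{\omega}\cdot|X|\cdot\log|\max_\psi(D)|)$ editing bound. Since old nonterminals are untouched, persistence is free and the invariant is preserved, so $D$ itself remains accessible.

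For access on $\sem{\cA}(\sem{\psi}(D))$, I run Algorithm~\ref{alg:accslps} from $S_\psi$ instead of $S_0$. Correctness follows exactly as in Theorem~\ref{theo:slps}, via the invariant and Lemmas~\ref{lem:alg1}, \ref{lem:ms} and~\ref{lem:upd}. The path-duplicating \textsc{Update} creates only fresh copies, so other roots are unaffected and restoration is unnecessary. The running time is governed by the depth of $S_\psi$. Strong balance gives depth $O(\log|\max_\psi(D)|)$, which yields the stated access bound (it in fact yields the logarithmic version).

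The step I expect to be the main obstacle is verifying that the~\cite{slpedits} operations genuinely have the ``only fresh nonterminals, children-before-parent'' form. In particular, the rotations used to rebalance after concatenation must not rewrite rules of shared nonterminals. If some step does rewrite a rule in place, I will replace it with a fresh copy. Rotations touch $O(1)$ nodes per level, so this keeps the same asymptotic count of new nodes. I also need to check that the \textsc{Update} duplication does not break strong balance; it does not, because it copies a path without changing shape.
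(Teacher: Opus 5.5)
Your proposal is correct and follows the paper's route. The paper only sketches this proof: it defers to the persistent AVL-grammar edit machinery of \cite{slpedits}, applied as in \cite{slpenum2}. Each fresh nonterminal $A\to BC$ gets its $k+1$ matrices as products of its children's matrices, and the SLP access phase is then run from the new root. Your remark that strong balance actually gives $O(\log|\max_{\psi}(D)|)$ access, so the stated linear bound follows a fortiori, is consistent with the paper's analysis.
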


\subsection{Edits on strings}

We framed our results on edits solely for SLPs given that they can be trivially adapted for words. However, we claim that the editing procedure on the string case can be simpler, and, in particular, be done in a way that keeps the number of nodes in the data structure unchanged. 
The idea would be to use classic results for editing AVL trees, as it is done in~\cite{diraccmso}.
We defer further details to a future full version of this paper.

\section{Conclusions and future work}\label{sec:concl}

We have presented a dynamic direct (ranked) access algorithm for MSO queries on strings and SLP-compressed strings. We improve on~\cite{diraccmso} by shaving a log-factor in the access procedure, and showing an extension to SLPs. 
We note that the PhD theses of Bagan~\cite{phdbagan} and Kazana~\cite{phdkazana} both already had a version of direct access on MSO queries (even for trees), so the main contribution of our work is in adding the dynamic aspect, and the extension to SLPs.

Something of a blind spot of our work with respect to Bourhis et. al.'s~\cite{diraccmso} is that we do not say anything about query evaluation for semigroups---we actually did not see how our result gives further insights here. It would be interesting to have a characterization of which semigroups support log-time direct access and which do not by some lower bound.

For future work, we think that the main question is whether our techniques presented here can be extended to dynamic direct access for MSO queries on trees. We believe this to be possible, for example, via~\cite{niewerthtrees}. It would also prove technically challenging since this framework does not preserve contiguity, which is crucial in our solution.

Also, there is a clear shortcoming in the fact that the SLP needs to be {\em strongly balanced} to admit logtime updates.
In particular, this means that it does not work for an arbitrary SLP. We leave open the question of whether some weaker balancedness criterion for SLPs could achieve this.

\bibliography{ref}

\appendix

\section{Proofs of Section~\ref{sec:words}}

\subsection{Proof of Lemma~\ref{lem:ms}}

\begin{proof}
Let $\rho_L$ be an $(l,m)$-partial run over $w$ from $p$ to some $q'$ and let $\rho_R$ is an $(m+1,r)$-partial run over $w$ from $q'$ to $q$; further, let $\rho_L$ respect $\tau_L$ and let $\rho_R$ respect $\tau_R$. We concatenate the runs and obtain $\rho_L\cdot\rho_R$ which is an $(l,r)$-partial run over $w$ that respects $\tau_L\cdot\tau_R$. Likewise, let $\rho$ be an $(l,r)$-partial run over $w$ that respects $\tau_L\cdot\tau_R$. We can separate it at $m$ into two partial runs $\tau_L,\rho_R$: they both respect $\tau_L\cdot\tau_R$, yet $\tau_L$ does not involve any value in $[m+1,r]$, and $\tau_R$ does not involve any value in $[l,m]$, so in particular, $\rho_L$ respects $\tau_L$ and $\rho_R$ respects $\tau_R$.
We obtain that
$
 \cM\langle l, r:\tau_L 
\cdot \tau_R\rangle(p,q) = \sum_{q'\in Q}\cM\langle l, m:\tau_L\rangle(p,q')\cdot \cM\langle m+1, r:\tau_R\rangle(q',q)
$.
\end{proof}

\end{document}